\pgfplotsset{width=10cm,compat=1.9}
\newtheorem{theorem}{Theorem} 
\newtheorem{lemma}[theorem]{Lemma}
\newtheorem{definition}[theorem]{Definition}
\newtheorem{proposition}[theorem]{Proposition}
\newtheorem{corollary}[theorem]{Corollary}
\newtheorem*{definition*}{Definition}
\newcommand{\ZZ}{\mathcal{Z}}
\newcommand{\eps}{\varepsilon}
\newcommand{\PP}{\mathcal{P}}
\crefname{algorithm}{Algorithm}{Algorithms}
\crefname{assumption}{Assumption}{Assumptions}
\crefname{equation}{}{}
\crefname{figure}{Fig.}{Figs.}
\crefname{table}{Table}{Tables}
\crefname{section}{Section}{Sections}
\crefname{subsection}{Section}{Sections}
\crefname{theorem}{Theorem}{Theorems}
\crefname{lemma}{Lemma}{Lemmmas}
\crefname{proposition}{Proposition}{Propositions}
\crefname{definition}{Definition}{Definitions}
\crefname{corollary}{Corollary}{Corollaries}
\crefname{remark}{Remark}{Remarks}
\crefname{example}{Example}{Examples}
\crefname{appendix}{Appendix}{Appendices}
\newcommand{\XX}{\mathcal{X}}
\newcommand{\WW}{\mathcal{W}}
\newcommand{\alg}{\mathcal{A}}
\newcommand{\Al}{\mathcal{A}}
\newcommand{\pr}{\mathbb{P}}
\newcommand{\prin}{\pr_{\text{in}, x}}
\newcommand{\prout}{\pr_{\text{out}, x}}
\newcommand{\hockey}{D_{e^{\eps}}}
\newcommand{\xin}{X_{\text{in}}(x)}
\newcommand{\xout}{X_{\text{out}}(x)}
\title{Why Does Differential Privacy with Large $\eps$ Defend Against Practical Membership Inference Attacks?}
\author {
    Andrew Lowy\textsuperscript{\rm 1},
    Zhuohang Li\textsuperscript{\rm 2},
    Jing Liu\textsuperscript{\rm 3},
    Toshiaki Koike-Akino\textsuperscript{\rm 3},
    Kieran Parsons\textsuperscript{\rm 3},
    Ye Wang\textsuperscript{\rm 3}
}
\begin{document}

\maketitle

\begin{abstract}
For ``small'' privacy parameter $\eps$ (e.g. $\eps < 1$), $\eps$-differential privacy (DP) provides a strong \textit{worst-case} guarantee that no \textit{membership inference attack} (MIA) can succeed at determining whether a person's data was used to train a machine learning model. The guarantee of DP is \textit{worst-case} because: a) it holds even if the attacker already knows the records of all but one person in the data set; and b) it holds uniformly over all data sets. In practical applications, such a worst-case guarantee may be overkill: \textit{practical} attackers may lack exact knowledge of (nearly all of) the private data, and our data set might be easier to defend, in some sense, than the worst-case data set. Such considerations have motivated the industrial deployment of DP models with large privacy parameter (e.g. $\eps \geq 7$), and it has been observed empirically that DP with large $\eps$ can successfully defend against state-of-the-art MIAs. Existing DP theory cannot explain these empirical findings: e.g., the theoretical privacy guarantees of $\eps \geq 7$ are essentially vacuous. In this paper, we aim to close this gap between theory and practice and understand \textit{why a large DP parameter can prevent practical MIAs}. To tackle this problem, we propose a new privacy notion called \textit{practical membership privacy} (PMP). PMP models a practical attacker's uncertainty about the contents of the private data. The PMP parameter has a natural interpretation in terms of the success rate of a practical MIA on a given data set. We quantitatively analyze the PMP parameter of two fundamental DP mechanisms: the exponential mechanism and Gaussian mechanism. Our analysis reveals that a large DP parameter often translates into a much smaller PMP parameter, which guarantees strong privacy against \textit{practical} MIAs. Using our findings, we offer principled guidance for practitioners in choosing the DP parameter. 
\end{abstract}

\section{Introduction}

Machine learning (ML) systems, such as large language models (LLMs), have the potential to transform various facets of society and industry. However, the growing ubiquity of these systems raises privacy concerns and a long line of work has demonstrated how to \textit{attack} ML models and uncover private details about individuals whose data was used to train the model. For example, \cite{carlini2021extracting} extracted individual training examples by querying an LLM. 

\textit{Membership inference attacks} (MIAs)~\cite{shokri2015privacy,dwork2015robust} are a fundamental class of privacy attacks. An MIA receives a trained model $\alg(D)$ and a target data point $x$ as inputs and aims to infer whether or not the target point was used to train the model (i.e. whether or not $x \in D$). In this paper, we focus on white-box attackers who know the (randomized) algorithm $\alg$. MIAs can violate people's privacy: for example, genomic data sets may contain information about people with a particular medical diagnosis, and knowing that someone is in the data set reveals that they have the diagnosis~\cite{homer2008resolving}. Moreover, MIAs are often used as building blocks for other attacks, such as training data extraction attacks~\cite{carlini2021extracting}. Thus, if we can prevent MIAs, we can often also prevent other attacks. 

For sufficiently small $\delta$ and $\eps \geq 0$, \textit{$(\eps, \delta)$-differential privacy} (DP)~\cite{dwork2006calibrating} guarantees that no MIA can succeed with high probability, by requiring that the output distribution of the ML model be insensitive to the presence or absence of any individual data point (see Definition~\ref{def: DP}). Pure $\eps$-DP bounds the probability that an arbitrary MIA can succeed by $1/(1 + e^{-\eps})$. Thus, for example, $\eps \leq 0.1$ implies that no MIA can do much better than randomly guessing ($52.5\%$) whether or not a target data point was used to train the model. However, the guarantee of DP degrades rapidly with $\eps$,
e.g., if $\eps \geq 7$, then an $\eps$-DP algorithm is potentially vulnerable to MIAs that succeed with probability $\geq 0.999$. On the other hand, large $\eps$ values of $\eps \geq 7$ are often deployed in industrial applications~\cite{apple-differential-privacy, rappor14, ding2017collecting, desfontainesblog20211001}. Moreover, values of $\eps \geq 8$ have been shown empirically to be highly effective at thwarting state-of-the-art MIAs~\cite{carlini2022membership}. Existing theory cannot adequately explain the empirical success of $\eps$-DP with large $\eps$ at defending against MIAs. 

This paper aims to bridge this gap between theory and practice. We rigorously address the following question:
\begin{center}
\textit{Why does DP with large $\eps$ defend against practical
MIAs?}   
\end{center}

\paragraph{Contributions}
To answer this question, we begin with the observation that differential privacy provides a guarantee against a \textit{worst-case} MIA, which holds uniformly for all data sets. Namely, \textit{DP ensures that even an attacker with knowledge of $n-1$ data points $D \setminus \{x\}$ cannot infer whether or not the target point $x$ was used as input to $\alg$}. On the other hand, \textit{practical} attackers typically do not have such fine-grained knowledge of the underlying data set as the worst-case attacker that DP models. Indeed, the literature on MIAs typically assumes that the attacker has some knowledge of the data \textit{distribution} (e.g. query access to the distribution or knowledge of a subpopulation from which the data was randomly drawn), but does not know any of the points in the given data set with certainty. The attacker must rely on the output of the training algorithm $\alg(D)$ and distributional knowledge to infer membership of the target point $x$.  

We model this practical MIA setting in our definition of \textit{practical membership privacy} (PMP, Definition~\ref{def: PMP}). We show that PMP is a useful notion of privacy: PMP is weaker than the strong worst-case notion of DP (Proposition~\ref{prop: PMP is weaker than DP}), but strong enough to guarantee that no practical MIA can succeed with high probability (Lemma~\ref{lem: attack success rate}). Moreover, PMP is not susceptible to the blatant privacy breaches that afflict other weakenings of DP that have been defined in the literature (see Related Work and Appendix). 

We analyze the relationship between the PMP and DP parameters for two popular DP mechanisms: the exponential mechanism~\cite{mcsherry2007mechanism} and the Gaussian mechanism~\cite{dwork2006calibrating}. We show that the PMP parameter can be much smaller than the DP parameter for these mechanisms, e.g., the $\eps$-DP exponential mechanism satisfies $\eps/75$-PMP for certain subpopulations. This helps explain why large values of $\eps$ can provide strong protection against practical MIAs: for example, the $(\eps = 7.5)$-DP exponential mechanism lacks meaningful privacy guarantees against a worst-case attacker, but the resulting $(\eps/75 = 0.1)$-PMP guarantee ensures that no practical MIA can succeed with probability much higher than random guessing ($52.5\%$). We conclude by discussing the implications of our results for practitioners in choosing the DP parameter, and highlighting interesting directions for future work.

\paragraph{Differential Privacy} 
\begin{definition}[Differential Privacy~\cite{dwork2006calibrating}]
\label{def: DP}
Let $\varepsilon \geq 0, ~\delta \in [0, 1).$ A randomized algorithm $\Al: \XX^n \to \mathcal{Z}$ is \textit{$(\varepsilon, \delta)$-differentially private} (DP) if for all pairs of adjacent data sets $D, D' \in \XX^n$
and all measurable subsets $S \subseteq \mathcal{Z}$, we have
\begin{equation*}
\mathbb{P}(\alg(D) \in S) \leq e^\varepsilon \mathbb{P}(\alg(D') \in S) + \delta,
\end{equation*}
where the probability is solely over the randomness of $\alg$. 
If~$\delta = 0$, we say that $\alg$ satisfies ``pure DP'' and write $\eps$-DP. If~$\delta > 0$, we say ``approximate DP'' and write $(\eps, \delta)$-DP. 
\end{definition}

\section{Practical Membership Privacy}

In this section, we define a privacy notion---called practical membership privacy (PMP)---that models the practical MIA setting.  

PMP models a membership inference attacker who does not know any elements of $D^*$ with certainty, but has some distributional knowledge of $D$. Specifically, we assume that the attacker knows \textit{a ``parent set'' $X \in \XX^{2n}$ from which $D$ was drawn uniformly at random}\footnote{The choice of $2n$ as the size of the parent set is for analytical convenience. Our analysis extends to the case where, e.g., $X$ contains $\alpha n$ points for some $\alpha > 1$.}. One can interpret the parent set $X$ as representing a subpopulation from which the data was known to be drawn (e.g., health insurance customers or hospital patients) or
a dataset (e.g., MNIST) consisting of training samples $D$ and test samples $X \setminus D$.
PMP ensures that such an attacker cannot succeed in correctly determining membership of any target point with high probability:

\begin{definition}[Practical Membership Privacy\footnote{To simplify some of our analyses, we will assume that $X$ consists of $2n$  distinct points, w.l.o.g: If there are repeated points, then we can re-define $X$  without repeats for some smaller $n$.}]
\label{def: PMP}
Let $\varepsilon \geq 0, ~\delta \in [0, 1)$ and $X \in \XX^{2n}$. A randomized algorithm $\Al: \XX^n \to \mathcal{Z}$ satisfies \textit{$(\varepsilon, \delta)$-practical membership privacy} (PMP) with respect to $X$ if for all $x \in X$ and all measurable subsets $S \subseteq \mathcal{Z}$, we have
\begin{align*}
&e^{-\varepsilon} \left(\mathbb{P}(\alg(D) \in S | x\notin D) - \delta\right) \\
&\quad \leq \mathbb{P}(\alg(D) \in S | x \in D) \\
&\quad \leq e^\varepsilon \mathbb{P}(\alg(D) \in S | x\notin D) + \delta,
\end{align*}
where the probability is taken both over the random draw of $D \sim \textbf{Unif}\left(\{E \subset X: |E| = n \}\right)$ and the randomness of $\alg$. $\alg$ is $(\eps, \delta)$-PMP if $\alg$ is $(\eps, \delta)$-PMP with respect to $X$ for all $X \in \XX^{2n}$.
To denote ``pure PMP'', when $\delta = 0$, we will simply write $\eps$-PMP as a shorthand for ($\eps, 0$)-PMP.
\end{definition}

The key differences between the PMP model and the DP model are: 1) our (practical) attacker only has partial information about the other $n-1$ samples in $D$, whereas DP allows the (worst-case) attacker to know the other $n-1$ samples with certainty; and 2) our definition is dependent on the parent data set $X$, whereas DP holds uniformly over all data sets. Our assumption on the attacker's knowledge is more realistic than the DP assumption in many private data analysis settings: In practice, it is uncommon that an attacker knows $n-1$ points in a data set (but not the $n$-th point). However, it is often the case that an attacker knows that the data set was drawn from some sub-population $X \subset \XX$; Definition~\ref{def: PMP} models an attacker with this knowledge.

The following lemma provides alternative characterizations of PMP: 
\begin{lemma}
\label{lem: equivalent definitions of PMP}
Let $X \in \XX^{2n}$, $x \in X$, $\xin := \{D \subset X : |D| = n, x \in X\}$, and $\xout = \{D \subset X : |D| = n, x \notin X\}$. Let $S \subset \ZZ$ be a measurable set. 
If
\begin{align}
\label{eq: a}
&e^{-\eps}\left(\pr(x \notin D | \alg(D) \in S) - \delta\right) \nonumber \\
&\quad \leq \pr(x \in D | \alg(D) \in S) \nonumber \\
&\quad \leq e^\eps \pr(x \notin D | \alg(D) \in S) + \delta,
\end{align}
then
\begin{align}
\label{eq: b}
&e^{-\eps}\left(\pr(\alg(D) \in S | x \notin D) - 2\delta\right) \nonumber \\
&\quad \leq \pr(\alg(D) \in S | x \in D ) \nonumber \\
&\quad \leq e^\eps \pr( \alg(D) \in S |x \notin D ) + 2\delta.
\end{align}
Also, \cref{eq: b} holds iff \begin{align}
\label{eq: c}
&e^{-\eps}\left(\frac{1}{N} \sum_{D' \in \xout} \pr_{\alg}(\alg(D') \in S) - \delta \right) \nonumber \\
&\quad \leq \frac{1}{N} \sum_{D \in \xin} \pr_{\alg}(\alg(D) \in S)  \nonumber \\
&\quad \leq e^\eps \left(\frac{1}{N} \sum_{D' \in \xout} \pr_{\alg}(\alg(D') \in S)\right) + \delta,
\end{align}
where $N := |\xin| = |\xout| = {2n \choose n}/2$ and the probabilities in \cref{eq: c} are taken solely over the randomness of $\alg$.

Moreover, if $\delta = 0$, then \cref{eq: a} holds iff \cref{eq: b} holds iff \cref{eq: c} holds. Thus, $\alg$ is $\eps$-PMP w.r.t. $X$ iff any of these three inequalities holds for all $x \in X$ and all $S \subset \mathcal{Z}$. 
\end{lemma}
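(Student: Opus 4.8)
The plan is to reduce everything to one structural fact: since $|\xin| = |\xout| = N$ and $D$ is drawn uniformly from the $\binom{2n}{n} = 2N$ size-$n$ subsets of $X$, the prior is balanced, $\pr(x \in D) = \pr(x \notin D) = 1/2$. This symmetry is the hinge that lets the posterior-form inequality \cref{eq: a}, the likelihood-form inequality \cref{eq: b}, and the averaged form \cref{eq: c} be compared, so I would record it first.

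For \cref{eq: b} $\iff$ \cref{eq: c} I would simply unfold the conditional likelihoods. Writing $\pr(\alg(D) \in S \mid x \in D) = \pr(\alg(D) \in S,\, x \in D)/\pr(x \in D)$ and using that each specific size-$n$ subset carries prior mass $1/\binom{2n}{n} = 1/(2N)$, the numerator equals $\frac{1}{2N}\sum_{D \in \xin}\pr_\alg(\alg(D)\in S)$; dividing by $\pr(x\in D) = 1/2$ gives exactly $\frac{1}{N}\sum_{D \in \xin}\pr_\alg(\alg(D)\in S)$, and symmetrically for $x \notin D$. Hence the two conditional likelihoods in \cref{eq: b} are \emph{literally} the two averages in \cref{eq: c}, so the inequalities coincide term by term and the equivalence holds for every $\delta$.

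For \cref{eq: a} $\Rightarrow$ \cref{eq: b} I would pass through Bayes' rule. Abbreviate $p = \pr(\alg(D)\in S \mid x \in D)$, $q = \pr(\alg(D)\in S \mid x \notin D)$, $a = \pr(x \in D \mid \alg(D)\in S)$, $b = \pr(x \notin D \mid \alg(D)\in S)$. Because the priors are equal, Bayes gives $a = p/(p+q)$ and $b = q/(p+q)$, the common prior and the marginal $\pr(\alg(D)\in S)$ cancelling. Substituting into the upper half of \cref{eq: a}, namely $a \le e^\eps b + \delta$, and clearing the positive denominator $p+q$ yields $p \le e^\eps q + \delta(p+q)$; since $p,q \in [0,1]$ we have $p+q \le 2$, which upgrades this to $p \le e^\eps q + 2\delta$. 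The lower half is symmetric: $e^{-\eps}(b-\delta) \le a$ becomes $e^{-\eps}(q - \delta(p+q)) \le p$, and again $p+q \le 2$ gives $e^{-\eps}(q - 2\delta) \le p$. This is precisely where the $\delta \to 2\delta$ inflation enters.

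Finally, for $\delta = 0$ all slack vanishes: the identity $a/b = p/q$ (again from equal priors) shows that \cref{eq: a}, i.e. $e^{-\eps} \le a/b \le e^\eps$, is equivalent to $e^{-\eps} \le p/q \le e^\eps$, which is \cref{eq: b}; together with the already-established \cref{eq: b} $\iff$ \cref{eq: c} this closes the chain of equivalences. The main obstacle I anticipate is not conceptual but bookkeeping: tracking the $\delta$ terms through the clearing of denominators to pin down the exact constant $2\delta$, and disposing of the degenerate case $p+q = 0$ (equivalently $\pr(\alg(D)\in S) = 0$), where the posteriors $a,b$ are undefined but \cref{eq: b} holds trivially since then $p = q = 0$, so that the Bayes manipulations are legitimate in the remaining case $p+q > 0$.
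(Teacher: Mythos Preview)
Your proposal is correct and follows essentially the same route as the paper: both arguments exploit the balanced prior $\pr(x\in D)=\pr(x\notin D)=1/2$, unfold the conditional likelihoods to identify \cref{eq: b} with \cref{eq: c}, and pass from \cref{eq: a} to \cref{eq: b} via Bayes' rule together with the bound $p+q\le 2$ (equivalently $2\pr(\alg(D)\in S)\le 2$), which is exactly where the $\delta\to 2\delta$ inflation arises. Your explicit treatment of the degenerate case $p+q=0$ is a nice addition that the paper handles only implicitly.
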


Proofs are deferred to the Appendix. A consequence of 
the equivalence between \cref{eq: b} and \cref{eq: c} is that if $n=1$, then $\eps$-PMP and $\eps$-DP are equivalent---and satisfy $\eps$-\textit{local differential privacy}~\cite{whatcanwelearnprivately}:
\begin{corollary}
\label{coro: DP = PMP}
If $n=1$, then $\alg$ is $(\eps, \delta)$-DP iff $\alg$ is $(\eps, 2 \delta)$-PMP w.r.t. $X$ for every $X \in \XX^{2n}$.    
\end{corollary}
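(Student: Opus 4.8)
The plan is to specialize every object in \cref{def: PMP} to $n=1$ and show that the parent-set structure degenerates into the neighbor relation underlying \cref{def: DP}. When $n=1$ a parent set is just a pair $X=\{x_1,x_2\}$ of distinct points, and $D\sim\uniform(\{E\subset X:|E|=1\})$ equals one of the singletons $\{x_1\},\{x_2\}$, each with probability $1/2$. The crucial observation is that for $n=1$ the event $\{x\in D\}$ pins down $D$ completely: conditioning on $x_1\in D$ forces $D=\{x_1\}$ and conditioning on $x_1\notin D$ forces $D=\{x_2\}$. Hence the two conditional output laws in \cref{def: PMP} collapse to unconditional mechanism outputs, $\pr(\alg(D)\in S\mid x_1\in D)=\pr_{\alg}(\alg(\{x_1\})\in S)$ and $\pr(\alg(D)\in S\mid x_1\notin D)=\pr_{\alg}(\alg(\{x_2\})\in S)$, with symmetric identities for $x_2$. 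In the language of \cref{lem: equivalent definitions of PMP} this is the statement that $\xin$ and $\xout$ are singletons and $N=\binom{2}{1}/2=1$, so the averaged form collapses to a single two-point comparison between $\alg(\{x_1\})$ and $\alg(\{x_2\})$.

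\textbf{DP $\Rightarrow$ PMP.} Assuming $(\eps,\delta)$-DP, fix any pair $X=\{x_1,x_2\}$. The singletons $\{x_1\}$ and $\{x_2\}$ differ in one record (Hamming distance one), hence are adjacent, so \cref{def: DP} applied in both orders gives $\pr_{\alg}(\alg(\{x_1\})\in S)\le e^{\eps}\pr_{\alg}(\alg(\{x_2\})\in S)+\delta$ and its reverse. Substituting the conditional identities above turns these into exactly the two inequalities defining PMP for $x_1$ and $x_2$; weakening $\delta$ to $2\delta$ then yields $(\eps,2\delta)$-PMP with respect to $X$, and since $X$ was arbitrary, $(\eps,2\delta)$-PMP.

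\textbf{PMP $\Rightarrow$ DP.} Conversely, assume $(\eps,2\delta)$-PMP with respect to every $X$. Every adjacent pair in $\XX^{1}$ is, by replacement adjacency at $n=1$, a pair of distinct singletons $\{a\},\{b\}$, realized by the parent set $X=\{a,b\}$. Reading \cref{def: PMP} for $x=a$ with respect to this $X$ and applying the conditional identities recovers the DP comparison between $\alg(\{a\})$ and $\alg(\{b\})$; ranging over all $X$ exhausts all adjacent pairs and reassembles \cref{def: DP}. The bookkeeping of the privacy parameter---the correspondence between the DP $\delta$ and the PMP $2\delta$---is exactly the content of the chain of equivalent forms in \cref{lem: equivalent definitions of PMP}, in particular the passage between the Bayesian posterior form \cref{eq: a} and the likelihood form \cref{eq: b} that carries the factor two.

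\textbf{Main obstacle.} I expect the only genuine subtleties to be (i) verifying that replacement adjacency at $n=1$ is literally ``any two distinct singletons,'' so that quantifying PMP over all parent sets $X\in\XX^{2}$ coincides with quantifying DP over all adjacent pairs, and (ii) tracking the $\delta$-versus-$2\delta$ accounting consistently through \cref{lem: equivalent definitions of PMP}. The degeneracy $N=1$ is what makes $n=1$ special: for $n>1$ the conditional laws are honest mixtures over $\xin$ and $\xout$ and do not reduce to comparisons of single neighboring data sets, so no such clean equivalence with DP can hold.
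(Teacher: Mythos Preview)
Your proposal is correct and follows essentially the same route as the paper: at $n=1$ the sets $\xin$ and $\xout$ are singletons ($N=1$), so both the conditional laws in \cref{def: PMP} and the averages in \cref{eq: c} collapse to the single two-point comparison that is $(\eps,\delta)$-DP, with the $\delta$-versus-$2\delta$ bookkeeping handed off to \cref{lem: equivalent definitions of PMP}. One small correction on that hand-off: in the paper's argument the halving of $\delta$ comes from the equivalence \cref{eq: b}$\Leftrightarrow$\cref{eq: c} (a $2\delta$ in the likelihood form corresponds to a $\delta$ in the averaged form, and it is the averaged form that becomes DP when $N=1$), not from the one-directional implication \cref{eq: a}$\Rightarrow$\cref{eq: b} you cite, which only goes the wrong way for the PMP$\Rightarrow$DP direction.
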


For $n>1$,  PMP is weaker than DP. For simplicity, we present this result for $\delta = 0$:
\begin{proposition}
\label{prop: PMP is weaker than DP}
If $\alg$ is $\eps$-DP, then $\alg$ is $\eps$-PMP. Moreover, if $n > 2$, then there exists an $\ln(2)$-PMP $\alg$ that is not $\eps'$-DP for any $\eps' < \infty$. 
\end{proposition}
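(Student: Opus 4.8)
Both halves are cleanest to attack through the averaging characterization in \cref{lem: equivalent definitions of PMP}. Since we are in the pure case $\delta = 0$, that lemma tells us $\alg$ is $\eps$-PMP w.r.t.\ $X$ if and only if \cref{eq: c} holds for every $x \in X$ and every measurable $S$; that is, the average acceptance probability over the $n$-subsets of $X$ containing $x$ is within a factor $e^{\pm\eps}$ of the average over those avoiding $x$. This replaces the conditional-probability statement by a purely combinatorial comparison of two averages over $\xin$ and $\xout$, each of size $N = \binom{2n}{n}/2$, which is what makes both directions tractable.

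\textbf{Part 1 (DP $\Rightarrow$ PMP).} The plan is to exhibit a pairing between $\xin$ and $\xout$ along which $\eps$-DP can be applied termwise. Consider the bipartite graph on $\xin \sqcup \xout$ that joins $D \in \xin$ to $D' \in \xout$ whenever $D$ and $D'$ are adjacent, i.e.\ $D' = (D \setminus \{x\}) \cup \{b\}$ for some $b \in X \setminus D$. Each $D \in \xin$ has exactly $n$ such neighbours (the $n$ choices of $b \in X \setminus D$) and each $D' \in \xout$ has exactly $n$ neighbours (the $n$ choices of which element of $D'$ to replace by $x$), so the graph is $n$-regular; by the K\"onig/Hall theorem it admits a perfect matching $\phi : \xin \to \xout$. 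Since $D$ and $\phi(D)$ are adjacent data sets in $\XX^n$, $\eps$-DP gives $\pr(\alg(D) \in S) \le e^{\eps}\,\pr(\alg(\phi(D)) \in S)$ for each $D$; summing over $D \in \xin$ and using that $\phi$ is a bijection turns the right-hand side into $e^{\eps} \sum_{D' \in \xout} \pr(\alg(D') \in S)$. Dividing by $N$ yields the upper bound in \cref{eq: c}, and the lower bound follows symmetrically. (A double-counting argument over all adjacent pairs, using $n$-regularity on both sides, works equally well and avoids invoking a matching.)

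\textbf{Part 2 (separation).} I would take the universe to have exactly $2n$ elements, say $\XX = \{1, \dots, 2n\}$, so that (after discarding repeats, per the footnote to \cref{def: PMP}) there is a single relevant parent set $X = \XX$, and define the deterministic mechanism $\alg(D) = \big(\sum_{i \in D} i\big) \bmod 2$, the parity of the sum. It is immediate that $\alg$ is not $\eps'$-DP for any finite $\eps'$: swapping an odd element of $D$ for an even element outside $D$ produces an adjacent data set with the opposite output, so for a suitable $S$ the two output distributions have disjoint supports. By the characterization above, $\ln(2)$-PMP reduces to showing that for every $x \in X$ the even-parity $n$-subsets containing $x$ number within a factor $2$ of those avoiding $x$ (and likewise for the odd-parity subsets); equivalently, that each element lies in a fraction in $[1/3,2/3]$ of the even-parity (resp.\ odd-parity) $n$-subsets. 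Since $X$ splits into $n$ odd and $n$ even numbers, these counts are hypergeometric parity sums whose value is $1/2$ up to a correction that is largest for small $n$, and a direct computation shows the fraction stays in $[2/5, 3/5] \subset [1/3,2/3]$ for all $n \ge 3$.

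\textbf{Main obstacle.} The matching/double-counting in Part~1 is routine. The delicate step -- and where the hypothesis $n > 2$ enters -- is the uniform factor-$2$ bound on the parity counts in Part~2: the heart of the argument is verifying that no single record is \emph{pivotal} for the output, i.e.\ that conditioning on $x \in D$ versus $x \notin D$ never shifts the average parity by more than a factor of $2$. For $n = 1$ this fails, since parity then coincides with membership of a single element, recovering \cref{coro: DP = PMP}; so the computation must genuinely use that $n$ is large enough that one record cannot determine the parity of a random subset. Pinning down the worst case (the smallest admissible $n$) and confirming the fraction never escapes $[1/3,2/3]$ is where I expect the real care to be required.
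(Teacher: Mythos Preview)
Your proposal is correct. Part~1 is essentially the paper's argument: the paper uses exactly the $n$-regularity of the $\xin$--$\xout$ adjacency graph and the double-counting identity $n\sum_{D'\in\xout}\pr(\alg(D')\in S)=\sum_{D\in\xin}\sum_{D'\sim D,\,D'\in\xout}\pr(\alg(D')\in S)$ you mention parenthetically; the perfect-matching variant is an equivalent repackaging.

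Part~2 takes a genuinely different route. The paper fixes $n=3$, takes $\XX=\{0,\dots,5\}$ and $\alg(D)=\operatorname{sum}(D)\bmod 6$, and checks by hand that the six output-preimage counts in $\xin$ and $\xout$ are each in $\{1,2\}$, yielding a ratio of exactly $2$; it then asserts without details that the construction extends to $n>3$. Your parity construction $\alg(D)=\operatorname{sum}(D)\bmod 2$ is cleaner: it works uniformly for all $n\ge 3$ via a single hypergeometric computation (indeed, the complement $D\mapsto X\setminus D$ swaps $\xin$ with $\xout$ and, since $\operatorname{sum}(X)=n(2n+1)$, preserves parity for $n$ even and flips it for $n$ odd, which immediately gives ratio $1$ for even $n$ and reduces odd $n$ to bounding $|a-b|/(a+b)$ with $a+b=\binom{2n-1}{n-1}$ and $|a-b|=\binom{n-1}{(n-1)/2}$, worst at $n=3$). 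As a bonus your construction actually delivers $\ln(3/2)$-PMP rather than $\ln(2)$-PMP, so it proves more than the proposition asks. One cosmetic point: your sentence ``swapping an odd element of $D$ for an even element outside $D$'' tacitly assumes $D$ contains an odd element; to show non-DP you only need a single adjacent pair with differing outputs, so it suffices to exhibit one concrete $D$.
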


Intuitively, Proposition~\ref{prop: PMP is weaker than DP} is true because the inequalities in~\cref{eq: c} involve averages over data points in $X$, rather than the worst-case supremum appearing in the definition of DP. Moreover, the data set might not be worst case for PMP. The averages correspond to the practical attacker's uncertainty about which samples are in $D$, which makes it harder to infer membership of $x$ than the worst-case DP attacker. Also, the $\ln(2)$-PMP parameter in Proposition~\ref{prop: PMP is weaker than DP} is not tight, as our construction for $n=3$ can be extended to get an $\eps$-PMP algorithm with $\eps < \ln(2)$ for $n > 3$, e.g., one can get $\eps \leq \ln(40/37) < 0.08$ for $n = 6$. 

Next, we bound the success probability of a \textit{practical MIA} (as defined at the beginning of this section) in terms of the PMP parameter: 
\begin{lemma}
\label{lem: attack success rate}
Let $\alg$ be $\eps$-PMP with respect to $X$ and $\mathcal{M}$ be any practical MIA. Then, the probability that $\mathcal{M}$ successfully infers membership, for any $x \in X$, never exceeds $1/(1 + e^{-\eps})$. 
\end{lemma}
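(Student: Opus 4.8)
The plan is to reduce the attack's success probability to pointwise statements about posterior membership probabilities, which $\eps$-PMP controls directly. Recall that a practical MIA $\mathcal{M}$ is a (possibly randomized) rule that, given the released output $\alg(D)$ and the target point $x$ (together with knowledge of the parent set $X$), outputs a guess in $\{\text{in}, \text{out}\}$ for whether $x \in D$. Fix a target $x \in X$; the relevant probability space is the draw $D \sim \uniform(\{E \subset X : |E| = n\})$ together with the randomness of $\alg$ and of $\mathcal{M}$. First I would dispose of the attacker's internal randomness by conditioning on it: a randomized $\mathcal{M}$ is a mixture of deterministic attacks, so if the bound holds for every deterministic attack it holds for the mixture. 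Hence it suffices to treat $\mathcal{M}$ as deterministic, specified by its acceptance region $S_{\text{in}} := \{z \in \ZZ : \mathcal{M}(z, x) = \text{in}\}$ and its complement $S_{\text{out}} := \ZZ \setminus S_{\text{in}}$.

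The key step is to translate $\eps$-PMP into bounds on the posterior. By Lemma~\ref{lem: equivalent definitions of PMP}, $\eps$-PMP with respect to $X$ is equivalent (when $\delta = 0$) to \cref{eq: a}, i.e. for every measurable $S \subseteq \ZZ$,
\begin{equation*}
e^{-\eps}\, \pr(x \notin D \mid \alg(D) \in S) \le \pr(x \in D \mid \alg(D) \in S) \le e^{\eps}\, \pr(x \notin D \mid \alg(D) \in S).
\end{equation*}
Writing $p := \pr(x \in D \mid \alg(D) \in S)$ and using $\pr(x \notin D \mid \alg(D) \in S) = 1 - p$, the right inequality gives $p \le e^{\eps}(1-p)$, hence $p \le 1/(1 + e^{-\eps})$; symmetrically, the left inequality yields $1 - p \le 1/(1 + e^{-\eps})$. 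Thus for every measurable $S$ with $\pr(\alg(D) \in S) > 0$, both posterior probabilities are at most $1/(1 + e^{-\eps})$.

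To finish, I would decompose the success probability of the deterministic attack via the law of total probability as
\begin{equation*}
\pr(\text{success}) = \pr(x \in D,\, \alg(D) \in S_{\text{in}}) + \pr(x \notin D,\, \alg(D) \in S_{\text{out}}).
\end{equation*}
Factoring each joint probability and applying the posterior bounds above with $S = S_{\text{in}}$ and $S = S_{\text{out}}$, respectively, gives
\begin{equation*}
\pr(\text{success}) \le \frac{1}{1 + e^{-\eps}}\Big(\pr(\alg(D) \in S_{\text{in}}) + \pr(\alg(D) \in S_{\text{out}})\Big) = \frac{1}{1 + e^{-\eps}},
\end{equation*}
since $S_{\text{in}}$ and $S_{\text{out}}$ partition $\ZZ$. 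Averaging over the attacker's randomness preserves the bound.

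The main thing to get right is bookkeeping rather than any deep obstacle: formalizing a ``practical MIA'' as a decision region, reducing randomized attacks to deterministic ones, and handling the measure-zero case $\pr(\alg(D) \in S) = 0$ (where the conditional is undefined, but the corresponding joint term vanishes and can simply be dropped). The only conceptual input is the equivalence from Lemma~\ref{lem: equivalent definitions of PMP} between $\eps$-PMP and the posterior formulation \cref{eq: a}; everything else is Bayes' rule and total probability.
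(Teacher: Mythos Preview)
Your argument is correct and follows essentially the same route as the paper: bound the posterior membership probability by $1/(1+e^{-\eps})$ and observe that no attack can beat the Bayes-optimal one. The paper derives the posterior bound by applying Bayes' rule directly to the likelihood-ratio form of $\eps$-PMP (with $\pr(x\in D)=\pr(x\notin D)=1/2$), whereas you invoke Lemma~\ref{lem: equivalent definitions of PMP} to jump straight to the posterior formulation and then decompose the success probability over the acceptance/rejection regions; your version is a bit more explicit about formalizing the attack as a decision region and handling randomized attacks, but the substance is the same.
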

Analogously, it is well-known that $\eps$-DP ensures that that success probability of the \textit{worst-case} attacker (who knows all but one sample of $D$) never exceeds $1/(1 + e^{-\eps})$.

In the Appendix, we record additional basic properties of PMP, such as post-processing.

\section{Related Work}
Some prior works have sought to understand why large $\eps$ effectively prevents practical privacy attacks from various different angles. Most of these approaches seek to weaken the assumptions on the attacker in some respect. 
See \citet[Section 7]{ghazi2022algorithms} for a thorough discussion of different directions in which weaker assumptions on the attacker may be imposed. 
Below, we list these directions and cite a few related works for each.

\paragraph{Assumptions about the attacker's capabilities:} DP assumes that the attacker has unlimited computational resources and is capable of executing any sort of attack. Some relaxations of DP, such as computational DP \cite{mironov2009computational}, model an attacker with limited computational resources. Other privacy notions (e.g., $k$-anonymity) model an attacker that only executes a specific type of attack (e.g., record-linkage attack). In contrast to these works, our PMP notion models an attacker with the same vast capabilities as the DP attacker.  

\paragraph{Assumptions about the attacker's goals:} DP protects against membership inference attacks, which is equivalent (up to a factor of $2$ in $\eps$) to an attacker learning an arbitrary one-bit function of the target individual's data. Some works have considered a modified attacker with more ambitious goals (e.g., training data reconstruction~\cite{hayes2023bounding}). Other works have relaxed the DP definition to consider an attacker that only aims to extract certain bits of information from the target individual, e.g., attribute-level partial DP~\cite{ghazi2022algorithms}. In contrast to these works, our work considers an attacker with the same goals as the DP attacker. Thus, the attacker that we model is stronger along the ``goals'' axis than these prior works. 

\paragraph{Assumptions about the attacker's knowledge:} DP permits an attacker to know everything about the data set except for one private bit that they aim to infer. Several works have sought to model the uncertainty that a practical attacker has about the contents of the data set, e.g.,~\cite{bassily2013coupled,li2013membership,yeom2018privacy,sablayrolles2019white,humphries2020investigating,izzo2022provable,leemann2023gaussian}. 

Similarly, our PMP notion models an attacker with weaker knowledge than the DP attacker. PMP has advantages over previously proposed privacy notions that model the attacker's uncertainty. For example, as we discuss in the Appendix, many previously proposed definitions can be satisfied by algorithms that leak the data of some members of the data set and are therefore not (intuitively) private. By contrast, PMP is not susceptible to these blatant privacy violations. Moreover, the focus of our work---on precisely understanding the risk of a privacy breach with a practical (uncertain) attacker against specific DP algorithms---is different from these prior works.

In the Appendix, we discuss prior works seeking to weaken assumptions about the attacker's knowledge in more detail. We highlight pathologies with previously proposed definitions, in which algorithms that clearly leak an individual's data can still satisfy these other definitions. Also, in contrast to some other works, PMP does not impose any distributional or independence assumptions on the underlying data. Instead, we allow for data to be drawn from an arbitrary subpopulation $X$. This makes our analysis harder, but also makes our definition and results stronger. Finally, we reiterate that prior works did not provide the quantitative interpretations of practical privacy guarantees of concrete DP mechanisms that our work provides. In this work, \textbf{we give quantitative bounds  relating the DP parameter $\eps$ to the PMP parameter} and a \textbf{precise interpretation of the guarantees of our PMP notion against any practical attacker} (Lemma~\ref{lem: attack success rate}). Together, these results enable a \textbf{rigorous interpretation of the privacy guarantees of $\eps$-DP against a practical (less knowledgeable) attacker}.

\section{Practical Privacy Guarantees of the Exponential Mechanism}
In this section, we characterize the practical membership privacy of one of the most powerful and versatile differentially private algorithms: the \textit{exponential mechanism}~\cite{mcsherry2007mechanism}. 
To define the exponential mechanism, let $\WW$ be a finite set of objects.\footnote{If $\WW$ is infinite, then the exponential mechanism can still be applied after discretizing $\WW$.} Let $\ell: \WW \times \XX^n \to \mathbb{R}$ be some loss function. Given data $D$, our goal is to privately select an object $w \in \WW$ that approximately minimizes the loss function.

\begin{definition}[Exponential Mechanism]
Given inputs $D, \WW, \ell$, the exponential mechanism $\alg_E$ selects and outputs some object $w \in \WW$. The probability that a particular $w$ is selected is proportional to $\exp\left(\frac{-\eps \ell(w, D)}{2 \Delta_\ell} \right)$, where $\Delta_\ell = \max_{w \in \WW} \sup_{D \sim D'; D, D' \in \XX^n} |\ell(w, D) - \ell(w, D')|$.
\end{definition}
 
\begin{lemma}~\cite{mcsherry2007mechanism}
    The exponential mechanism is $\eps$-DP.
\end{lemma}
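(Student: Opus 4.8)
The plan is to verify the differential privacy guarantee directly from the definition, bounding the ratio of output probabilities on adjacent data sets. Fix adjacent data sets $D, D' \in \XX^n$ and an arbitrary target output $w \in \WW$. Since $\WW$ is finite and every output has positive probability, it suffices to bound the ratio of the probability mass functions $\mathbb{P}(\alg_E(D) = w)/\mathbb{P}(\alg_E(D') = w)$ by $e^\eps$; summing over any $S \subseteq \WW$ then yields the measurable-set form in Definition~\ref{def: DP} with $\delta = 0$.

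First I would write out the normalized probabilities explicitly. By the definition of the exponential mechanism,
\begin{equation*}
\mathbb{P}(\alg_E(D) = w) = \frac{\exp\!\left(\frac{-\eps\, \ell(w, D)}{2 \Delta_\ell}\right)}{\sum_{w' \in \WW} \exp\!\left(\frac{-\eps\, \ell(w', D)}{2 \Delta_\ell}\right)},
\end{equation*}
and similarly for $D'$. Taking the ratio, the expression factors cleanly into a numerator term comparing $\ell(w, D)$ with $\ell(w, D')$ and a denominator term comparing the two normalizing sums:
\begin{equation*}
\frac{\mathbb{P}(\alg_E(D) = w)}{\mathbb{P}(\alg_E(D') = w)} = \exp\!\left(\frac{-\eps\,(\ell(w,D) - \ell(w,D'))}{2 \Delta_\ell}\right) \cdot \frac{\sum_{w'} \exp\!\left(\frac{-\eps\, \ell(w', D')}{2 \Delta_\ell}\right)}{\sum_{w'} \exp\!\left(\frac{-\eps\, \ell(w', D)}{2 \Delta_\ell}\right)}.
\end{equation*}

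Next I would bound each factor by $e^{\eps/2}$. For the first factor, the sensitivity definition gives $|\ell(w,D) - \ell(w,D')| \leq \Delta_\ell$, so the exponent is at most $\eps/2$ in absolute value, bounding that factor by $e^{\eps/2}$. For the denominator ratio, I would apply the same pointwise bound term by term: for every $w'$, $\exp(-\eps\,\ell(w',D')/(2\Delta_\ell)) \leq e^{\eps/2}\exp(-\eps\,\ell(w',D)/(2\Delta_\ell))$, and since this holds summand-by-summand it survives summation, so the ratio of sums is also at most $e^{\eps/2}$. Multiplying the two bounds gives $e^{\eps/2} \cdot e^{\eps/2} = e^\eps$, as required.

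The main (and only real) subtlety is the bookkeeping in the denominator: one must apply the sensitivity bound inside the sum \emph{before} taking the ratio, rather than trying to bound a ratio of sums directly, and keep careful track of which data set sits in the numerator versus the denominator so that the inequality points the right way. The symmetric roles of $D$ and $D'$ mean the reverse inequality follows by swapping their roles, which confirms the two-sided DP bound. Everything else is the routine normalization calculation, and the factor of $2$ in the $2\Delta_\ell$ scaling is precisely what splits the budget evenly between the two factors.
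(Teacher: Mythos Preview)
Your proof is correct and is the standard argument for the exponential mechanism's $\eps$-DP guarantee. The paper does not actually supply a proof of this lemma; it simply states the result with a citation to~\cite{mcsherry2007mechanism}, so there is nothing to compare on the paper's side beyond noting that your write-up reproduces exactly the classical proof one would find in that reference.
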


The following proposition gives an exact description of the PMP parameter as a function of the DP parameter $\eps$: 
\begin{proposition}
\label{prop: exponential mechanism PMP}
Let $X \in \XX^{2n}$. The $\eps$-DP exponential mechanism is $\tilde{\eps}(X)$-PMP with respect to $X$ if and only if \[
\tilde{\eps}(X) \geq \ln\left[\frac{\sum_{D \in \xin} c(D) \exp\left(-\frac{\eps}{2 \Delta_\ell} \ell(w,D) \right)}{\sum_{D' \in \xout} c(D') \exp\left(-\frac{\eps}{2 \Delta_\ell} \ell(w,D')\right)} \right]
\]
for all $w \in \WW$ and $x \in X$,
where $c(D) = \left[\sum_{w' \in \WW} \exp\left(\frac{- \eps \ell(w', D)}{2\Delta_{\ell}} \right)\right]^{-1}$ and $c(D')$ is defined similarly. 
\end{proposition}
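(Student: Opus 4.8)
The plan is to convert the PMP condition into a statement purely about the exponential mechanism's output probabilities and then collapse the quantifier over measurable sets down to a single point $w$. First I would invoke the equivalence of \cref{eq: b} and \cref{eq: c} from Lemma~\ref{lem: equivalent definitions of PMP} with $\delta = 0$: the mechanism is $\tilde\eps(X)$-PMP with respect to $X$ if and only if, for every $x \in X$ and every measurable $S \subseteq \ZZ$, the average selection probability over $\xin$ lies within a factor $e^{\pm\tilde\eps(X)}$ of the average over $\xout$, i.e. $\frac1N\sum_{D \in \xin}\pr_{\alg}(\alg_E(D) \in S)$ is sandwiched between $e^{-\tilde\eps(X)}$ and $e^{\tilde\eps(X)}$ times $\frac1N\sum_{D' \in \xout}\pr_{\alg}(\alg_E(D') \in S)$. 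Since $\WW$ is finite, every $S$ is a subset of $\WW$, and I would substitute the exact probabilities $\pr_{\alg}(\alg_E(D) = w) = c(D)\exp(-\tfrac{\eps}{2\Delta_\ell}\ell(w,D))$, so that each average over $\xin$ or $\xout$ becomes precisely the numerator or denominator in the claimed formula.

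The core step is reducing the quantifier over all $S$ to singletons $S = \{w\}$. Writing $a_w := \sum_{D \in \xin} c(D)\exp(-\tfrac{\eps}{2\Delta_\ell}\ell(w,D))$ and $b_w := \sum_{D' \in \xout} c(D')\exp(-\tfrac{\eps}{2\Delta_\ell}\ell(w,D'))$, both nonnegative, I would apply the mediant (weighted-average) inequality: for any $S$, the ratio $\frac{\sum_{w \in S} a_w}{\sum_{w \in S} b_w}$ lies between $\min_{w \in S} \tfrac{a_w}{b_w}$ and $\max_{w \in S} \tfrac{a_w}{b_w}$. Hence the two-sided bound over all $S$ is equivalent to the two-sided bound over all singletons, namely
\[
e^{-\tilde\eps(X)} \le \frac{a_w}{b_w} \le e^{\tilde\eps(X)} \quad \text{for every } w \in \WW,
\]
where the $1/N$ factors have cancelled. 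Taking the logarithm of the right-hand inequality yields exactly the displayed bound $\tilde\eps(X) \ge \ln[a_w/b_w]$ for all $w$ and $x$.

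I expect the main obstacle to be bookkeeping the two-sidedness cleanly rather than any hard estimate. Because PMP is a two-sided notion, the reduction above actually produces the requirement $\tilde\eps(X) \ge \bigl|\ln[a_w/b_w]\bigr|$ for all $w, x$: the displayed inequality records the upper-bound direction, and the matching lower bound is the same expression with the roles of $\xin$ and $\xout$ (equivalently $a_w$ and $b_w$) interchanged. The delicate point I would need to nail down is that enforcing the displayed inequality for all $w \in \WW$ and all $x \in X$ already captures the binding constraint, i.e.\ that the reciprocal direction is subsumed once one quantifies over every target point and every output. Here I would lean on the mass-conservation identities $\tfrac1N(a_w + b_w)$ being independent of $x$ and $\sum_{w} \pr_{\alg}(\alg_E(D) = w) = 1$ to relate a large reciprocal ratio $b_w/a_w$ at one $(x,w)$ to a large direct ratio $a_{w'}/b_{w'}$ at another output, rather than assuming symmetry of the log-ratio family outright; this is the one step where care is genuinely required, and if the reconciliation does not go through the statement should be read with $\ln[\cdot]$ replaced by its absolute value, leaving the reduction to singletons untouched.
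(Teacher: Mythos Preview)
Your approach is essentially the paper's own: invoke Lemma~\ref{lem: equivalent definitions of PMP} (the equivalence with \cref{eq: c} at $\delta=0$), plug in the exponential mechanism's point masses $\pr_{\alg}(\alg_E(D)=w)=c(D)\exp(-\tfrac{\eps}{2\Delta_\ell}\ell(w,D))$, and read off the ratio condition. Your explicit mediant-inequality reduction from general $S\subset\WW$ to singletons $\{w\}$ is a welcome addition that the paper leaves implicit.

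On the two-sidedness: your instinct that the mass-conservation route will not close the gap is correct. The identity $a_w+b_w=\sum_{D\subset X,\,|D|=n}\pr_{\alg}(\alg_E(D)=w)$ is indeed independent of $x$, but it does not force $\max_w \ln(a_w/b_w)=\max_w \ln(b_w/a_w)$ for a fixed $x$, nor does varying $x$ help (already with $|\WW|=2$ and, say, $(a_{w_1},a_{w_2})=(3,1)$, $(b_{w_1},b_{w_2})=(2,2)$ the two maxima differ). So the honest statement is your fallback: the characterization is $\tilde\eps(X)\ge \bigl|\ln[a_w/b_w]\bigr|$ for all $w,x$. The paper's proof is equally terse on this point and writes only the one-sided form, so you are not missing anything the paper supplies---you have simply noticed an imprecision in the proposition as stated.
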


For a given loss function $\ell$ and subpopulation $X$, Proposition~\ref{prop: exponential mechanism PMP} allows us to compute the PMP paramater $\tilde{\eps}(X)$ of the exponential mechanism as a function of $\eps$. In combination with Lemma~\ref{lem: attack success rate}, this will allow us to interpret $\eps$ in terms of the success rate of an arbitrary practical membership inference attacker.

\paragraph{Numerical Simulations}
We investigate the PMP parameter $\tilde{\eps}$ vs. the DP parameter $\eps$ for different subpopulations $X$. We fix the loss function: $\ell(w, D) = \frac{1}{n} \sum_{i=1}^n \|w - D_i\|_2$, which is a convex empirical risk minimization problem corresponding to the geometric median. Our goal is to understand the \textit{ratio} $\tilde{\eps}(X)/\eps$ that we get for different $X$, and different factors that affect the ratio (e.g., the distribution and dimension of the data). We choose $\WW = \{w_1, \ldots, w_m\}$ to be a set of $m$ random standard normal unit vectors in $\mathbb{R}^d$, standardized to have unit $\ell_2$-norm. We then draw $X \sim \mathcal{N}(w_1, \sigma^2)^{2n \times d}$ and clip the $\ell_2$ norm of each data point, so $\|x_i\|_2 \leq C$ for all $i \in [2n]$, where $C$ is the clip threshold.  

Recall that there are two key differences between PMP and DP: one difference lies in the attacker's knowledge/uncertainty about the data, and the second is that PMP is defined with respect to a subpopulation $X$, whereas DP is worst-case over all $X$. In order to disentangle these two effects, we plot two curves in each experiment: the (average, over $T$ trials) ratios $\tilde{\eps}(X)/\eps$ and the ratio$_X$ $\tilde{\eps}(X)/\eps(X)$. Here $\eps(X)$ is defined as in Definition~\ref{def: DP} except that that we only require the inequality to hold for adjacent data sets $D, D'$ that are subsets of $X$, rather than $\XX^n = \{x \in \mathbb{R}^d : \|x\|_2 \leq C\}^n$.
The ratio$_X$ $\tilde{\eps}(X)/\eps(X)$ controls for the effect of the data and just describes the effect of the practical attacker's uncertainty compared to the worst-case DP attacker's certainty about members of $D \setminus \{x\}$. The ratio $\tilde{\eps}(X)/\eps$ captures the role of both the attacker's knowledge and the data being potentially easier to defend than the worst-case data set.  

Figure~\ref{fig: exp mech ratio v sigma} shows the ratios $\tilde{\eps}(X)/\eps$ and $\tilde{\eps}(X)/\eps(X)$ vs. the standard deviation $\sigma$ of the data. Note that for small $\sigma$, the data is easier to defend/harder to attack because everyone in the data set looks similar: the attacker cannot easily distinguish between the output distribution of the algorithm when the target $x \in D$ vs. when $x \notin D$. Conversely, large $\sigma$ makes it likely that some ``outlier'' $x$ that is easier for the attacker to identify will be in $X$. 
Thus, the ratio $\tilde{\eps}(X)/\eps$ increases with $\sigma$. On the other hand, the ratio $\tilde{\eps}(X)/\eps(X)$ does not significantly depend on $\sigma$. 

\begin{figure}[ht]
    \includegraphics[width=0.48\textwidth]{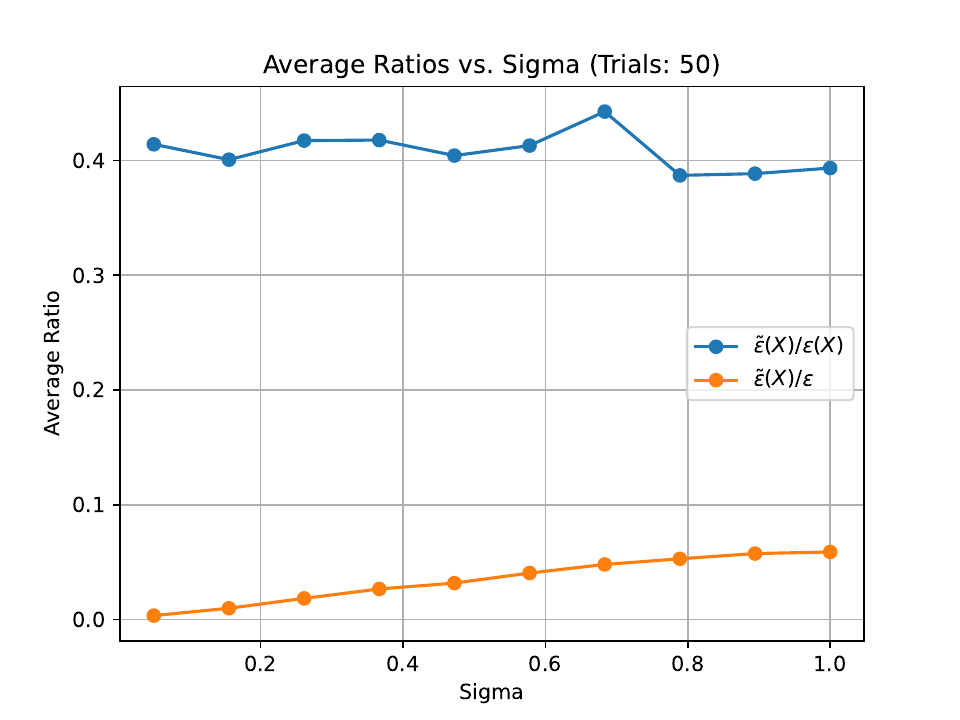}
    \caption{Ratios vs. $\sigma$, with $1$-dim. data, $n=6$, $m=10$, $C = 10$, $\eps(X) = 5$.}
    \label{fig: exp mech ratio v sigma}
\end{figure}

For example, when $\sigma = 1$ (standard normal data), the ratio $\tilde{\eps}(X)/\eps \approx 0.075$, which mostly reflects the fact that this data set is far from worst case. In this case, $\eps(X) = 5$ and $\eps \approx 28.5$, which does not afford any meaningful privacy guarantees under classical DP theory. However, the PMP parameter $\tilde{\eps}(X) \approx 2.14$, which provides a meaningful guarantee against practical MIAs on this particular subpopulation $X$, by Lemma~\ref{lem: attack success rate}. Moreover, for small $\sigma < 1$, the smaller ratios imply stronger PMP guarantees for fixed values of $\eps$: e.g. for $\sigma < .1$, the PMP parameter $\tilde{\eps}(X)$ approaches zero.

\begin{figure}[ht]
    \includegraphics[width=0.48\textwidth]{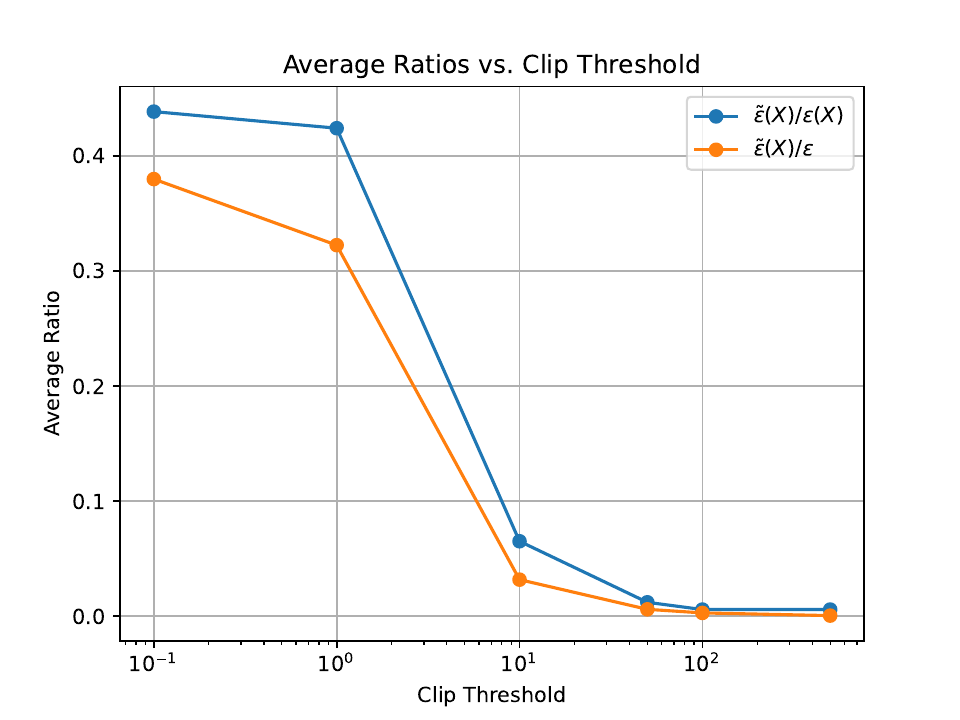}
    \caption{Ratios vs. Clip threshold $C$, with $5$-dim. data, $n=6$, $2$ outliers, $m=32$, $\sigma = 1$, $\eps(X) = 10$.}
    \label{fig: exp mech ratio v clip}
\end{figure}

Figure~\ref{fig: exp mech ratio v clip} shows the effect of clip threshold on the ratios. A small clip threshold $C$ reduces the effect of outlier data points, while a large clip threshold $C$ permits more outliers in the data set. To amplify the effect of outliers, we choose $2$ points in $X$ at random and multiply them by $100$. These extreme outliers cause $\eps(X)$ (and $\eps$) to be much larger than $\tilde{\eps}(X)$, since the worst-case DP attacker who knows an outlier in $D \setminus \{x\}$ can use this information to easily infer membership of $x$. By contrast, the practical attacker cannot use outliers to launch an MIA as effectively because they are uncertain about which other points are in $D$. For example, when $C = 50$, both ratios are less than $0.0123$. This means that a $10$-DP algorithm with no meaningful privacy guarantee against a worst-case attacker satisfies $0.123$-PMP and hence can defend against any practical attacker almost perfectly ($1/(1 + e^{-.123}) \approx 0.53$). Moreover, the DP parameter $\eps \approx \eps(X)$ in the presence of extreme outliers because such $X$ is nearly worst-case from a privacy perspective. In this experiment, $n=6$ since the runtime of computing $\tilde{\eps}(X)$ is exponential in $n$. We would expect the ratios to become even smaller for larger $n$ because the practical attacker's uncertainty would increase.

\begin{figure}[ht]
    \includegraphics[width=0.48\textwidth]{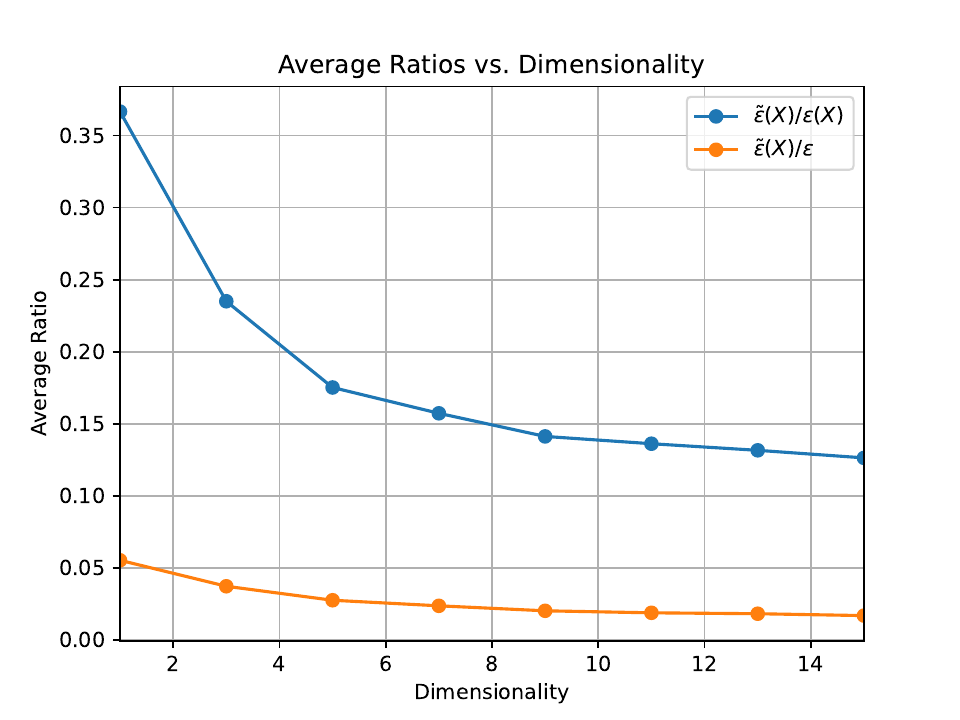}
    \caption{Ratios vs. Dimension of data $d$, with $n=6$, $m=10$, $\sigma = 1$, $\eps(X) = 2$.}
    \label{fig: exp mech ratio v dim}
\end{figure}

Finally, Figure~\ref{fig: exp mech ratio v dim} shows that the ratios become smaller as the dimension of the data increases. This can be attributed to the particular choice of loss function and Euclidean geometry in higher dimensions. In general, the effect of dimension on the ratios will depend on the loss function/problem.     

\section{Practical Privacy Guarantees of the Gaussian Mechanism}

This section analyzes the practical membership privacy of one of the most widely used $(\eps, \delta)$-DP algorithms: the \textit{Gaussian Mechanism}. Given a function $q: \XX^n \to \mathbb{R}^d$, the Gaussian mechanism simply adds isotropic Gaussian noise to the output of $q$: 
\[
\alg_G(D) := q(D) + \mathcal{N}\left(0, \sigma^2 \mathbf{I}_d\right).
\]

Denote the cumulative distribution function of $Y \sim \mathcal{N}(0, 1)$ by $\Phi$.

\begin{lemma}\cite{balle2018improving}
\label{lem: balle}
Let $q: \XX^n \to \mathbb{R}^d$ be a function with global $\ell_2$-sensitivity $\Delta = \sup_{D \sim D'} \|q(D) - q(D') \|_2$. For any $\eps \geq 0$ and $\delta \in [0, 1]$, the Gaussian mechanism $\alg_G$ is $(\eps, \delta)$-DP if and only if \begin{align}
\label{eq: balle}
\Phi\left(\frac{\Delta}{2\sigma} - \frac{\eps \sigma}{\Delta}\right) - e^{\eps} \Phi\left(-\frac{\Delta}{2\sigma} - \frac{\eps \sigma}{\Delta}\right) \leq \delta. 
\end{align}
\end{lemma}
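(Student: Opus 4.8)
The plan is to reduce the $d$-dimensional problem to a comparison of two univariate Gaussians, for which the worst-case distinguishing set and the induced divergence admit a closed form. By Definition~\ref{def: DP}, $(\eps,\delta)$-DP requires, for every adjacent pair $D \sim D'$ and every measurable $S$, that $\pr(\alg_G(D) \in S) - e^\eps \pr(\alg_G(D') \in S) \le \delta$. Taking the supremum over $S$ turns this into the single condition $\hockey(\alg_G(D) \,\|\, \alg_G(D')) \le \delta$, where the optimal set is the likelihood-ratio super-level set $\{z : p_D(z) > e^\eps p_{D'}(z)\}$ by a Neyman--Pearson argument. Since the Gaussian hockey-stick divergence is invariant under swapping the two means (reflection symmetry), it suffices to evaluate $\sup_{D \sim D'} \hockey(\alg_G(D) \,\|\, \alg_G(D'))$ and show it equals the left-hand side of~\cref{eq: balle}.

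First I would reduce to one dimension. For adjacent $D, D'$, set $v := q(D) - q(D')$, so $\|v\|_2 \le \Delta$. Because the injected noise $\mathcal{N}(0, \sigma^2 \mathbf{I}_d)$ is isotropic, I can rotate coordinates so that $v$ lies along the first axis; the remaining $d-1$ coordinates then have identical marginals under $\alg_G(D)$ and $\alg_G(D')$ and cancel in the likelihood ratio. Hence the privacy loss depends only on the first coordinate, and the divergence equals that between the scalar laws $\mathcal{N}(0, \sigma^2)$ and $\mathcal{N}(\|v\|_2, \sigma^2)$.

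Next I would perform the explicit computation for $P = \mathcal{N}(0,\sigma^2)$ and $Q = \mathcal{N}(s, \sigma^2)$ with $s = \|v\|_2$. The log-likelihood ratio $\ln(p/q)(z) = (s^2 - 2 s z)/(2\sigma^2)$ is affine and decreasing in $z$, so the super-level set $\{z : \ln(p/q)(z) > \eps\}$ is a half-line $\{z < t\}$ with $t = s/2 - \sigma^2 \eps / s$. Integrating gives $\hockey(P \,\|\, Q) = \pr_P(z < t) - e^\eps \pr_Q(z < t) = \Phi(\tfrac{s}{2\sigma} - \tfrac{\eps\sigma}{s}) - e^\eps \Phi(-\tfrac{s}{2\sigma} - \tfrac{\eps\sigma}{s})$, which at $s = \Delta$ is exactly the left-hand side of~\cref{eq: balle}.

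Finally, to conclude that the supremum over adjacent pairs is controlled by $s = \Delta$ — so that the single condition~\cref{eq: balle} is both necessary and sufficient — I would show that $g(s) := \Phi(\tfrac{s}{2\sigma} - \tfrac{\eps\sigma}{s}) - e^\eps \Phi(-\tfrac{s}{2\sigma} - \tfrac{\eps\sigma}{s})$ is continuous and nondecreasing in $s > 0$, e.g. by differentiating and checking the sign using $\Phi' = \varphi$ together with the explicit form of $t$. This monotonicity is the one delicate step: it guarantees that the worst-case divergence over all achievable $\|v\|_2 \le \Delta$ equals $g(\Delta)$ (by continuity, even when the sensitivity supremum is only approached, not attained), whence $\alg_G$ is $(\eps,\delta)$-DP iff $g(\Delta) \le \delta$. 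The reduction to one dimension and this monotonicity are the crux of the argument; the intervening Gaussian integrals are routine.
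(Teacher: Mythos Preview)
The paper does not prove this lemma; it is quoted directly from \cite{balle2018improving} and used as a black box, so there is no in-paper argument to compare against. Your proposal is correct and is essentially the proof given in that reference: the reduction to the hockey-stick divergence via Neyman--Pearson, the rotation to one dimension using isotropy of the noise, the closed-form evaluation of the univariate Gaussian integral, and the monotonicity of $g(s)$ in $s$ to pin the worst case at $s=\Delta$. For the last step, note that with $a = s/(2\sigma)$ and $b = \eps\sigma/s$ one has $(a+b)^2 - (a-b)^2 = 4ab = 2\eps$, hence $\varphi(a-b) = e^{\eps}\varphi(-a-b)$; this collapses the derivative to $g'(s) = \varphi(a-b)/\sigma > 0$, which is the clean way to verify the monotonicity you flagged as the delicate point.
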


For $\alg := \alg_G$ and $x \in X$, define the mixture distributions $\prin(S) := \frac{1}{|\xin|} \sum_{D \in \xin} \pr_{\alg}(\alg(D) \in S)$ and $\prout(S) := \frac{1}{|\xout|} \sum_{D \in \xout} \pr_{\alg}(\alg(D) \in S)$.
Our analysis will utilize the following characterizations of DP and PMP, which are immediate from the definitions:
\begin{lemma}
\label{lem: hockey}
Denote the \textit{hockey-stick divergence} between random variables $P$ and $Q$ by
$D_{e^{\eps}}(P \| Q):= \int_{\mathbb{R}} \max\{0, p(t) - e^{\eps} q(t)\}dt$, where $p$ and $q$ denote the probability density or mass functions of $P$ and $Q$ respectively. Then, $\alg$ is $(\eps, \delta)$-DP if and only if $\max\{ D_{e^{\eps}}(\alg(D) \| \alg(D')), D_{e^{\eps}}(\alg(D') \| \alg(D))\} \leq \delta$ for all $D \sim D'$. Moreover, $\alg$ is $(\eps, \delta)$-PMP w.r.t. $X$ if and only if $\max\{ D_{e^{\eps}}(\prin \| \prout), D_{e^{\eps}}(\prout \| \prin)\} \leq \delta$ for all $x \in X$.  
\end{lemma}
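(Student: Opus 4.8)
The plan is to reduce both equivalences to a single variational identity for the hockey-stick divergence and then read off each claim directly from the definitions. The identity I would establish first is: for any two distributions $P, Q$ on $\ZZ$ with densities (or mass functions) $p, q$,
\[
\sup_{S \subseteq \ZZ} \left[\pr(P \in S) - e^\eps \pr(Q \in S)\right] = \hockey(P \| Q).
\]
Granting this, the condition ``$\pr(P \in S) \leq e^\eps \pr(Q \in S) + \delta$ for all measurable $S$'' is equivalent to ``$\hockey(P \| Q) \leq \delta$'', and this is the only analytic fact both parts of the lemma require.

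To prove the identity, I would write, for any measurable $S$,
\[
\pr(P \in S) - e^\eps \pr(Q \in S) = \int_S \big(p(t) - e^\eps q(t)\big)\, dt,
\]
and observe that this integral is maximized over all $S$ precisely by including every point where the integrand is positive, i.e. by the choice $S^* = \{t : p(t) > e^\eps q(t)\}$. The resulting maximum is $\int_{\ZZ} \max\{0,\, p(t) - e^\eps q(t)\}\, dt = \hockey(P \| Q)$, yielding both achievability and the upper bound. The only point requiring care is the routine justification that $S^*$ is optimal (for any $S$, $\int_S (p - e^\eps q) \leq \int_S \max\{0, p - e^\eps q\} \leq \int_{\ZZ} \max\{0, p - e^\eps q\}$); for discrete $\ZZ$ the integral is replaced by a sum with no change to the argument.

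For the DP claim, I would apply the identity with $P = \alg(D)$ and $Q = \alg(D')$. By Definition~\ref{def: DP}, $(\eps,\delta)$-DP asserts $\pr(\alg(D) \in S) \leq e^\eps \pr(\alg(D') \in S) + \delta$ for all measurable $S$ and all adjacent $D \sim D'$, which by the identity is exactly $\hockey(\alg(D) \| \alg(D')) \leq \delta$. Since adjacency is symmetric, imposing this for every adjacent pair simultaneously forces both $\hockey(\alg(D) \| \alg(D'))$ and $\hockey(\alg(D') \| \alg(D))$ to be at most $\delta$, i.e. the stated maximum condition.

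For the PMP claim, I would first match the mixture distributions to the conditionals in Definition~\ref{def: PMP}. Under $D \sim \textbf{Unif}(\{E \subset X : |E| = n\})$, conditioning on $x \in D$ gives a uniform distribution over $\xin$, so $\pr(\alg(D) \in S \mid x \in D) = \prin(S)$, and symmetrically $\pr(\alg(D) \in S \mid x \notin D) = \prout(S)$. The right-hand PMP inequality then reads $\prin(S) \leq e^\eps \prout(S) + \delta$ for all $S$, which by the identity is $\hockey(\prin \| \prout) \leq \delta$. The left-hand PMP inequality $e^{-\eps}(\prout(S) - \delta) \leq \prin(S)$ rearranges (multiply by $e^\eps$) to $\prout(S) \leq e^\eps \prin(S) + \delta$ for all $S$, which is $\hockey(\prout \| \prin) \leq \delta$; combining the two directions gives the maximum condition for each fixed $x \in X$. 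I do not anticipate a genuine obstacle: the lemma is essentially bookkeeping once the variational identity is in hand, and the only place to stay alert is correctly pairing the reversed (left-hand) PMP inequality with the reversed divergence $\hockey(\prout \| \prin)$.
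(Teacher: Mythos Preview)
Your proposal is correct and matches the paper's approach: the paper simply states that the lemma is ``immediate from the definitions'' and offers no further proof, and the variational identity $\sup_{S}\left[P(S) - e^{\eps} Q(S)\right] = \hockey(P \| Q)$ you establish is exactly the standard fact that makes this immediacy transparent. Your handling of both the DP and PMP directions, including the pairing of the reversed PMP inequality with $\hockey(\prout \| \prin)$, is accurate.
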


The following technical result will be crucial in our analysis.
\begin{proposition}
\label{prop: Gauss}
Let $\alg_G$ be the $(\eps, \delta)$-DP Gaussian mechanism. Then, for any $x \in X$,
\begin{align}
\label{eq: prop12 thing}
&\max\left\{\hockey(\prin || \prout), \hockey(\prout || \prin)\right\} \\
&\quad \leq \frac{1}{n |\xin|} \sum_{D \in \xin} \sum_{D' \in \xout, D' \sim D} \nonumber \\
&\quad \quad \Bigg[\Phi\left(\frac{\|q(D) - q(D')\|}{2 \sigma} - \frac{\eps \sigma}{\|q(D) - q(D')\|} \right) \nonumber \\
&\quad \quad - e^\eps\Phi\left(-\frac{\|q(D) - q(D')\|}{2 \sigma} - \frac{\eps \sigma}{\|q(D) - q(D')\|} \right) \Bigg].
\nonumber
\end{align} 
\end{proposition}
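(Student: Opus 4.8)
The plan is to recognize the right-hand side as the average, over all adjacent pairs $(D,D')$ with $D \in \xin$ and $D' \in \xout$, of the per-pair hockey-stick divergence $\hockey(\alg(D)\|\alg(D'))$, and then to bound each of the two mixture divergences on the left by this average via the joint convexity of the hockey-stick divergence. Writing $g(t) := \Phi\!\left(\frac{t}{2\sigma} - \frac{\eps\sigma}{t}\right) - e^{\eps}\Phi\!\left(-\frac{t}{2\sigma} - \frac{\eps\sigma}{t}\right)$, the hockey-stick divergence between $\mathcal{N}(q(D),\sigma^2\mathbf{I})$ and $\mathcal{N}(q(D'),\sigma^2\mathbf{I})$ depends only on $\|q(D)-q(D')\|$ (the multivariate divergence collapses to a one-dimensional one along the line joining the means), and the Balle--Wang analysis underlying Lemma~\ref{lem: balle} computes it to be exactly $g(\|q(D)-q(D')\|)$ --- Lemma~\ref{lem: balle} itself being the specialization to the worst-case pair, i.e.\ the maximal distance $\Delta$, since $g$ is increasing. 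Thus the claimed right-hand side is precisely $\frac{1}{n|\xin|}\sum_{D\in\xin}\sum_{D'\in\xout,\,D'\sim D}\hockey(\alg(D)\|\alg(D'))$.

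The crux is a combinatorial reindexing that expresses \emph{both} $\prin$ and $\prout$ as \emph{uniform} mixtures over the \emph{same} index set, so that joint convexity can be applied term-by-term. Consider the bipartite graph on $\xin \sqcup \xout$ whose edges $E$ join adjacent pairs $D \sim D'$. I would first check that this graph is $n$-regular on both sides: a given $D \in \xin$ contains $x$ together with $n-1$ other points, and every adjacent $D' \in \xout$ arises by deleting $x$ and inserting one of the $n$ points of $X \setminus D$, giving exactly $n$ neighbors; symmetrically each $D' \in \xout$ has exactly $n$ neighbors in $\xin$. Combined with $|\xin| = |\xout| = N$ (Lemma~\ref{lem: equivalent definitions of PMP}), this gives $|E| = nN$. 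Since each $\alg(D)$ is counted once per incident edge, $n$-regularity yields
\[
\prin = \frac{1}{N}\sum_{D\in\xin}\alg(D) = \frac{1}{nN}\sum_{(D,D')\in E}\alg(D), \qquad \prout = \frac{1}{nN}\sum_{(D,D')\in E}\alg(D'),
\]
both being uniform mixtures over $E$ with identical weights $1/(nN)$ (as mixture distributions).

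Given this common-index representation I would finish via joint convexity of $\hockey$: because $t\mapsto\max\{0,t\}$ is convex and the two mixtures share their weights, $\hockey\bigl(\tfrac{1}{nN}\sum_{(D,D')\in E}\alg(D)\,\big\|\,\tfrac{1}{nN}\sum_{(D,D')\in E}\alg(D')\bigr) \leq \frac{1}{nN}\sum_{(D,D')\in E}\hockey(\alg(D)\|\alg(D'))$. Substituting the per-pair formula $g(\|q(D)-q(D')\|)$ and rewriting the edge sum as $\sum_{D\in\xin}\sum_{D'\in\xout,\,D'\sim D}$ gives exactly the asserted bound on $\hockey(\prin\|\prout)$. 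Since $g$ is symmetric in the two data sets (it depends only on $\|q(D)-q(D')\|=\|q(D')-q(D)\|$), the identical argument with the roles of $\prin$ and $\prout$ exchanged yields the same bound on $\hockey(\prout\|\prin)$, hence on the maximum.

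The main obstacle I anticipate is the reindexing step: one must verify the $n$-regularity carefully and confirm that re-summing over edges reproduces each original mixture with \emph{uniform} weights. It is essential here that both sides of the bipartite graph have the same degree $n$ and the same cardinality $N$, since joint convexity only bounds the divergence of two mixtures that share a common weight vector; if the degrees or cardinalities differed, the two edge-indexed mixtures would no longer have matching weights. Once this structural fact is established, the convexity step and the identification with the Balle--Wang formula are routine.
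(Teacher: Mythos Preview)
Your proposal is correct and follows essentially the same route as the paper: both rely on the $n$-regularity of the bipartite adjacency graph between $\xin$ and $\xout$ to reindex the mixtures, then apply joint convexity of the hockey-stick divergence, and finally invoke the Balle--Wang closed form for $\hockey(\alg_G(D)\|\alg_G(D'))$. The only cosmetic difference is that the paper applies joint convexity in two nested steps (first rewriting $\prout$ as $\frac{1}{N}\sum_{D\in\xin}\bigl(\frac{1}{n}\sum_{D'\sim D}P'\bigr)$ and applying convexity over $\xin$, then over the inner sum), whereas you do it in one stroke by indexing both mixtures over the edge set $E$; these are equivalent, and your explicit verification of $n$-regularity is in fact more careful than the paper's presentation here.
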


The main tools used in the proof of Proposition~\ref{prop: Gauss} are joint convexity of the hockey-stick divergence (which holds since $\hockey$ is an $f$-divergence) and a bound on $\hockey(\alg_G(D) || \alg_G(D')$ due to \cite{balle2018improving}.  

By Proposition~\ref{prop: Gauss} and Lemma~\ref{lem: hockey}, $\alg_G$ is $(\eps, \delta)$-PMP if the right-hand side of inequality~\ref{eq: prop12 thing} is upper-bounded by $\delta$. The differences between this sufficient condition for PMP and the condition~\cref{eq: balle} for DP is that~\cref{eq: balle} is worst-case over all pairs of adjacent data sets in $\XX^n$, whereas PMP only requires an average-case bound over all adjacent subsets of $X$.

\paragraph{Our Approach}
Our approach for analyzing the PMP parameter $\tilde{\eps}(X)$ for the $(\eps, \delta)$-DP Gaussian mechanism is as follows: \begin{enumerate}
    \item Given target DP parameters $(\eps, \delta)$, find the approximately smallest $\sigma$ such that the Gaussian mechanism is $(\eps, \delta)$ via Lemma~\ref{lem: balle} and ~\cite[Algorithm 1]{balle2018improving}. 
    \item Upper bound the hockey-stick divergence between $\prin$ and $\prout$ in Proposition~\ref{prop: Gauss}. 
    \item Using the value of $\sigma$ obtained in step 1), find the approximately smallest $\tilde{\eps}(X)$ such that our upper bound in Proposition~\ref{prop: Gauss} is $\leq \delta$ for all $x \in X$: this ensures that the Gaussian mechanism is $(\tilde{\eps}(X), \delta)$-PMP w.r.t. $X$, by Lemma~\ref{lem: hockey}. 
\end{enumerate}
Note that a naive implementation of step 3 would run in exponential (in $n$) time. To execute step 3 efficiently, we observe that the right-hand-side of Inequality~\ref{eq: prop12 thing} 
can be greatly simplified when the function is of the form $q(D) = \sum_{x \in D} f(x)$, where $f$ is some sample-wise function.
Since, the summation is constrained to be over $D$ containing $x$ and $D'$ that is adjacent to $D$, where $x$ is replaced with a different $x'$, the value of $q(D) - q(D')$ is equal to $f(x) - f(x')$.
Thus, the terms of the summation are a function of only $x'$ (given that $x$ is fixed), with each possible $x' \neq x$ repeatedly appearing an equal number of times.
Hence, instead of dealing with the average-case over all adjacent datasets, we can compute an equivalent average over all choices of $x' \neq x$, given by
\begin{align*}
&\frac{1}{2n - 1} \sum_{x' \neq x} \Bigg[\Phi\left(\frac{\|f(x) - f(x')\|}{2 \sigma} - \frac{\eps \sigma}{\|f(x) - f(x')\|} \right) \\
&\quad - e^\eps\Phi\left(-\frac{\|f(x) - f(x')\|}{2 \sigma} - \frac{\eps \sigma}{\|f(x) - f(x')\|} \right) \Bigg].
\end{align*} 

\paragraph{Numerical Simulations}

For our simulations, we consider empirical mean estimation: $q(D) = \sum_{x \in D} x/n$. The goals of these simulations are the same as in the simulations of the previous section: to quantify the ratios $\tilde{\eps}(X)/\eps$ and $\tilde{\eps}(X)/\eps(X)$ and understand the factors that cause these ratios to be large or small. We draw an i.i.d. Gaussian data set $X \sim \mathcal{N}(0, \sigma^2)^{2n \times d}$ and clip the $\ell_2$ norm of each data point, so $\|x_i\|_2 \leq C$ for all $i \in [2n]$, in order to bound global sensitivity of $q$.

Figure~\ref{fig:gauss mech ratio v epsX} shows the ratios vs. the DP parameter $\eps(X)$. First, note that the ratio $\tilde{\eps}(X)/\eps$ is small for all values of $\eps(X)$. For example, even when $\eps(X) = 10$ and $\tilde{\eps}(X)/\eps$ is at its largest, we still have a small PMP parameter $\tilde{\eps}(X) < 0.9$. Second, we see that there is a large gap between the two (orange and blue) curves, especially when $\eps(X)$ is large. This indicates that the worst-case DP parameter $\eps$ is significantly bigger than the subpopulation-specific DP parameter $\eps(X)$ in this experiment. Thus, $X$ is far from being worst-case. Third, the ratios increase with the DP parameter $\eps(X)$.  

\begin{figure}[ht]
    \includegraphics[width=0.48\textwidth]{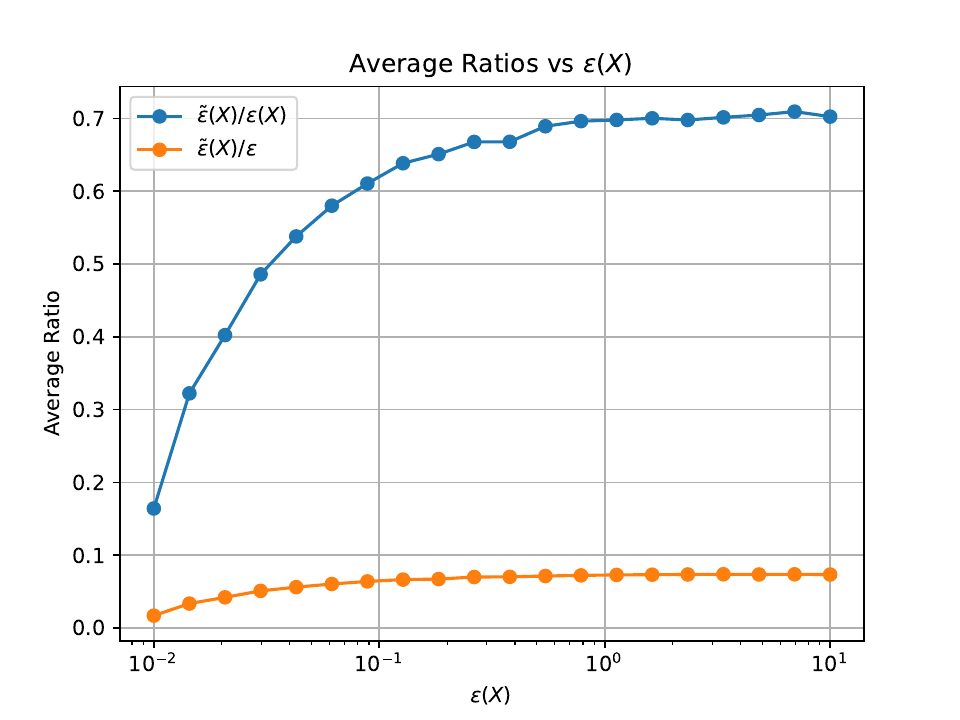}
    \caption{Ratios vs. $\eps(X)$, with $n=100$, $d = 20$, $C=50$, no outliers, $\sigma = 1$, $\delta = 10^{-2}$.}
    \label{fig:gauss mech ratio v epsX}
\end{figure}

Figure~\ref{fig:gauss mech ratio v clip} shows the effect of the clip threshold $C$ on the ratios in the presence of outliers. We produce outliers by choosing 2 points at random and scaling them by a factor of 10. Similar to Figure~\ref{fig: exp mech ratio v clip}, we see that the ratios shrink as the clip threshold $C$ increases. For example, for large $C = 100$, a DP parameter of $\eps = 5$ would translate into a much smaller PMP parameter of $\tilde{\eps}(X) = 1$. One difference between Figure~\ref{fig:gauss mech ratio v clip} and Figure~\ref{fig: exp mech ratio v clip} is that the gap between the blue and orange curves is larger in Figure~\ref{fig:gauss mech ratio v clip} than in Figure~\ref{fig: exp mech ratio v clip}. The reason is that the data $X$ is relatively easier to keep private in the experiment that was used to produce~\ref{fig:gauss mech ratio v clip}, whereas $X$ was nearly worst-case in Figure~\ref{fig: exp mech ratio v clip}. This is due to differences in the outlier scaling, dimension, $\sigma$, and the loss function/learning problem. 

\begin{figure}[ht]
    \includegraphics[width=0.48\textwidth]{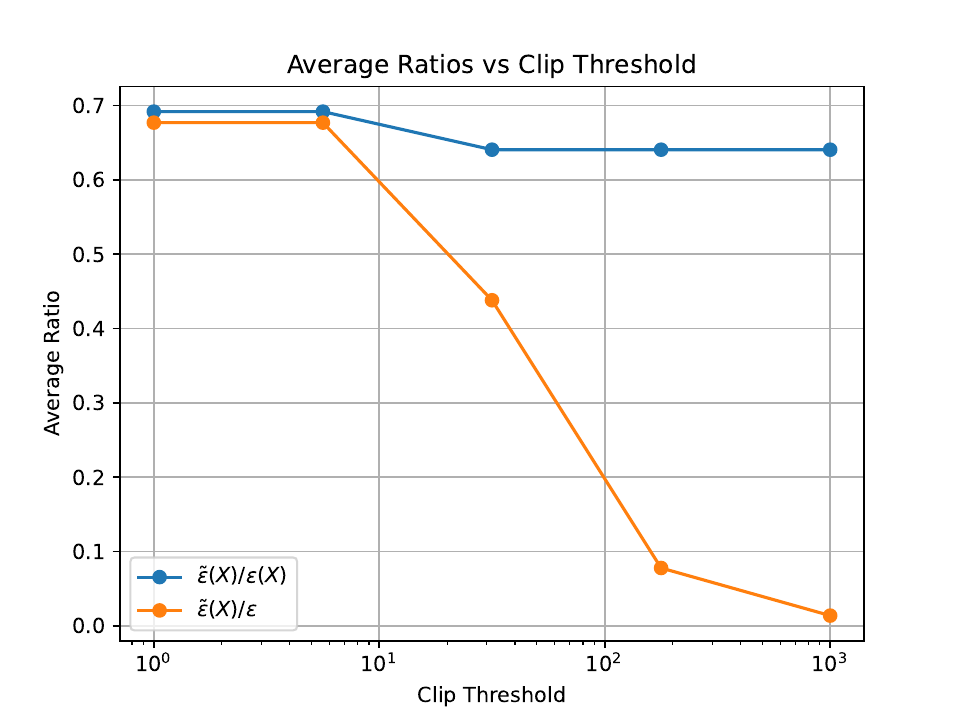}
    \caption{Ratios vs. Clip threshold $C$, with $n = 100$, $d = 10$, 2 outliers, $\sigma = 5$, $\delta = 10^{-2}$.}
    \label{fig:gauss mech ratio v clip}
\end{figure}

Figure~\ref{fig:gauss mech ratio v dims} shows that the ratios increase with the dimension of the data. In combination with Figure~\ref{fig: exp mech ratio v dim}, we see that the effect of dimension on the ratios may differ substantially for different learning problems. Thus, practitioners may want to apply problem-specific context to guide the choice of $\eps$. 

\begin{figure}[ht]
\includegraphics[width=0.48\textwidth]{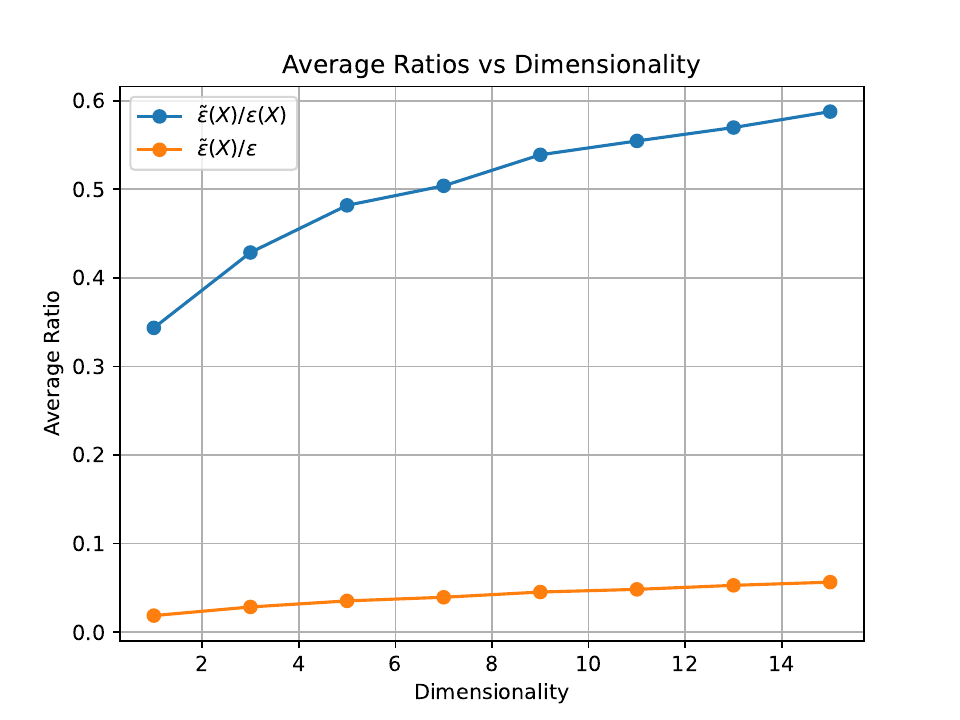}
    \caption{Ratios vs. Dimensionality, with $\eps$ with $n=100$, $C=50$, no outliers, $\sigma = 1$, $\delta = 10^{-2}$.}
    \label{fig:gauss mech ratio v dims}
\end{figure}

\section{Discussion and Conclusion}
In this paper, we analyzed the risk of data leakage of DP algorithms against a practical attacker who lacks certainty about the contents of the data set. At a high level, our results are encouraging: we rigorously show that even at larger $\eps$, DP mechanisms can actually provide guaranteed defense against practical MIAs.

We also gleaned more granular insights. For example, Figure~\ref{fig: exp mech ratio v sigma} indicates that if a data analyst has \textit{a priori} knowledge that the subpopulation from which data is drawn is approximately i.i.d./homogeneous, then they can afford to choose larger $\eps$: homogeneous data is easier to keep private. Also, data sets containing extreme outliers make it relatively much easier for a worst-case MIA to attack than for a practical MIA (e.g., see Figure~\ref{fig: exp mech ratio v clip}). Strategies like aggressive clipping can be used to mitigate the negative effects of outliers on privacy. Practitioners can use our code (which we plan to make available online) to help choose an appropriate $\eps$ for their particular problem/data population, while aiming to get a small corresponding PMP parameter, e.g., $\tilde{\eps}(X) \leq 0.1$.
 
We emphasize that our motivation for studying the notion of PMP was to better understand DP; we do not advocate for using PMP as a substitute for DP. PMP has certain shortcomings: As discussed in~\citet[Section 7]{ghazi2022algorithms}, an attacker's level of uncertainty may decrease over time, e.g., due to subsequent releases of information. Consequently, PMP does not satisfy the same sequential composition property that DP satisfies. We hope that by providing clearer interpretations of the DP parameter in terms of vulnerability to practical MIAs, our work facilitates more widespread use of DP algorithms in industry and government.

\bibliography{references}

\appendix 
\onecolumn 
\section{More details on related works}
\label{app: related work}
In this Appendix, we discuss prior works seeking to weaken assumptions about the attacker's knowledge in more detail. We highlight pathologies with previously proposed definitions, in which algorithms that clearly leak an individual's data can still satisfy these other definitions. Also, in contrast to some other works, PMP does not impose any distributional or independence assumptions on the underlying data. Instead, we allow for data to be drawn from an arbitrary subpopulation $X$. This makes our analysis harder, but also makes our definition and results stronger. Finally, we re-iterate that prior works did not provide the quantitative interpretations of practical privacy guarantees of concrete DP mechanisms that our work provides. 
\vspace{.2cm}

The work of~\citet{bassily2013coupled} was motivated by similar goals to our own.
They propose \textit{distributional DP} (DDP), a special case of their more general ``coupled-worlds privacy'' framework. DDP utilizes a ``simulator'' in its definition, requiring that the output distribution of the algorithm $\alg(D)$ be $(\eps, \delta)$-indistinguishable from the output distribution of some simulator run on the ``scrubbed'' data $\textit{Sim}(D_{-i})$, for all $i \in [n]$. 
The paper shows that certain noiseless protocols (e.g. real-valued summation, histograms, stable functions) satisfy DDP w.r.t. certain distribution classes $\PP$.
Moreover, they discuss the relation of DDP to previous notions of privacy---namely, Pufferfish privacy~\cite{kifer2012pufferfish} and noiseless privacy~\cite{bhaskar2011noiseless}.

However, the DDP definition suffers from a shortcoming, which does not occur with our PMP definition.
Under fairly mild distribution classes, DDP can permit pathological algorithms that simply release the entire dataset.
Let $\PP$ consist of distributions on datasets $D$ of size $n$, where knowledge of any $n-1$ points reveals the remaining point.
In such a case, the algorithm $\alg(D) = D$ becomes permissible, as it is perfectly ($\epsilon = \delta = 0$) indistinguishable from a simulator $\textit{Sim}(D_{-i})$ that can also output the entire dataset, by recovering the missing point.
For example, let $D_1, \ldots, D_n$ be uniformly drawn from binary sequences with even parity, i.e., for any $i \in [n]$, we have $D_i = \oplus_{j \in [n]: j \neq i} D_j$.
Note that this is a simple modification of Example~1 from~\citet{bassily2013coupled}, but with the auxiliary side information $Z$ (meant to reveal the parity of the binary sequence) omitted, and instead the fixed parity is incorporated into the distribution of the binary sequence.
This also applies to generalizations of this example with distributions where the sum (or mean) of the dataset is fixed and known.

\vspace{.2cm}
\citet{li2013membership} proposes a notion of membership privacy that is similar in spirit to DDP. Roughly speaking, an algorithm satisfies the (positive) membership privacy notion of \cite{li2013membership} w.r.t. a family of distributions $\PP$ on $\XX^n$ if $\mathbb{P}_{P, \alg}(x \in X | \alg(X) \in S) \approx \mathbb{P}_P(x \in X)$ for all $P \in \PP, x \in \XX, S \subset \mathbb{A}$. Conceptually, the essential differences  between this definition and our definition of PMP are: that our definition is parameterized by a parent data set, whereas theirs is parameterized by a family of distributions that correspond to the attacker's prior knowledge; also, their definition is non-symmetric in positive vs. negative membership inference, whereas our definition is symmetric. To address this latter limitation, \cite{li2013membership} introduce a second definition of negative membership privacy that protects against attacks that determine that someone was \textit{not} a member of the training data. Having two definitions seems unnecessary and our framework eliminates this need. They prove post-processing property of their membership privacy notion. The paper concludes by giving different instantiations of membership privacy for different choices of $\PP$ and recovering prior notions of privacy (including DP) along the way. In particular, \citet[Theorem 5.10]{li2013membership} shows that their membership privacy definition recovers ``DP under sampling''~\cite{li2012sampling} for the distribution family $\PP_{\beta}$ consisting of distributions such that $P(x) \in \{0, \beta\}$ for some choice of $\beta$. An algorithm satisfies ``DP under sampling'' if it is DP when composed with the subsampling operation that first samples each point in the data set with probability $\beta$ and then executes the algorithm on the subsampled data set. A drawback of \cite{li2013membership} is that the privacy parameter of their definition is not analyzed carefully or related to the DP parameter. We address this drawback in our work. 
\vspace{.2cm}

The work of~\citet{long2017towards} proposes \textit{differential training privacy} (DTP) to empirically estimate the privacy risk of publishing a classifier. Their DTP definition is specifically given for classifiers that output a vector of probabilities for predicted labels $y$ and features $x$: $p_{\alg(T)}(y | x)$, where $T$ is the training data set. Thus, their DTP notion is also data set-specific. Essentially, their definition requires that the predicted label probabilities of $\alg$ do not change too much when any single point in the training data set is removed: $p_{\alg(T)}(y | x) \leq e^\eps p_{\alg(T \setminus z)}(y|x)$ should hold for all $z \in T$ and all feature-label pairs $(x,y)$ in the universe.
Thus, their definition seems to be conceptually more similar to DP than it is to our definition of PMP. They provide an efficiently computable approximation of DTP that they compute in empirical case studies. They use these case studies to reason about the privacy risks of non-DP classifiers trained on certain data sets. No theoretical treatment of their DTP notion is provided.
\vspace{.2cm}

The work of~\citet{yeom2018privacy} proposes a different distribution-dependent definition of membership privacy based on the following membership experiment: data $S \sim P^n$ is drawn i.i.d. from some distribution and a learning algorithm $\alg(S)$ is run on the training data. Then a random bit $b \sim Ber(1/2)$ is drawn. If $b = 0$, then we draw a point $z \in S$ at random. If $b = 1$, then we draw a random point $z \sim P$. The attacker observes the target point $z$ and the output of the algorithm $\alg(S)$ (and implicitly has knowledge of $P$) and tries to guess the value of $b$ (i.e., membership of $z$). They define the membership advantage of an attacker in terms of its success rate, and say an algorithm is membership private (w.r.t. $P$) if every attacker has small membership advantage. Note that this membership experiment is the one that~\citet{carlini2022membership} assume in their attack model.
Compared to our PMP notion, a critical difference is that their definition only protects the privacy of the people in the data set \textit{on average} (over the random draw of $z$ from $S$). By contrast, our definition provides a stronger \textit{worst-case} (over $z \in S$) guarantee, ensuring that the data of \textit{every person} in $S$ remains private. Another difference is that \citet{yeom2018privacy} uses a parent distribution $P$, whereas we use a parent data set $X$.  
\citet{yeom2018privacy}'s definition is conceptually similar to DDP, but the precise way it is measured (in terms of advantage) differs and also it is framed as an experiment with an MIA. 

\citet{yeom2018privacy} shows that DP implies bounded membership advantage and studies the connection between overfitting and membership advantage. Additionally, they look at the connection between membership inference and attribute inference.

\vspace{.2cm}
The work of~\citet{humphries2020investigating} proposed a variation of the definition in \cite{yeom2018privacy} to deal with a specific limitation of \cite{yeom2018privacy}'s definition. Namely, \citet{humphries2020investigating} argues that the i.i.d. data assumption is problematic because DP guarantees become much weaker in the presence of data dependencies and because the assumption may not be satisfied in practice. Thus, they modify the definition in \citet{yeom2018privacy} by assuming that $P$ is a mixture of $K$ distributions: first, $k \sim [K]$ is drawn uniformly and then $S \sim P_k^n$ is drawn (conditionally i.i.d. given $k$). If $b=1$, then the target point $z$ is drawn from the mixture distribution: first $k' \sim [K]$ is drawn and then $z \sim P_{k'}$. Note that this modification allows for data dependencies. 

\citet{humphries2020investigating} provides tighter bounds on the relation between DP and membership advantage, compared with \cite{yeom2018privacy}. They also empirically evaluate membership inference with data dependencies. Again, the main difference between our notion and \citet{humphries2020investigating} is that we use a parent set instead of a parent distribution. Note that our definition also permits data dependencies, since the parent data set may consist of dependent data.

\vspace{.2cm}
The work of~\citet{sablayrolles2019white} defines a training algorithm that returns a parameter $\theta$ as being $(\eps, \delta)$-membership private w.r.t. a loss function $\ell(\theta, z)$. Their definition~\citet[Definition 3]{sablayrolles2019white} essentially requires a membership private algorithm to satisfy $\ell(\theta, z_1) \approx \int_w \ell(t, z_1) p_T(w) dw$ with high probability over the random draw of the training data set $T = (z_1, \ldots, z_n)$. Here $p_T(w)$ is the posterior density of the parameter $w$ given $(z_2, \ldots, z_n)$, which is assumed to take a particular form given in \cite[Definition 12]{sablayrolles2019white}. Roughly speaking, it is assumed that $w$ depends on the data through an ``exponential mechanism''-like training algorithm. An immediate problem with their definition is the dependence on the loss function, which greatly reduces the generality and flexibility of the definition. 
\cite{sablayrolles2019white} characterize the optimal MIA under certain assumptions discussed above. They show that DP implies a bound on the membership advantage. They run experiments showing that their attack---based on the theoretically optimal attack under their assumption on the posterior---performs well.

\vspace{.2cm}
The work of~\citet{mahloujifar2022optimal} is motivated by the desire to get tighter bounds on membership inference privacy for existing algorithms. They measure membership inference privacy by using a very strong definition of membership privacy that is similar to DP in that it assumes (implicitly) that the attacker knows the other $n-1$ points in the training set. As we argue, this assumption is usually unrealistic and a major benefit of our PMP definition is that it relaxes this assumption by modeling the adversary's uncertainty about the training data set.

\vspace{.2cm}
The recent work of~\citet{izzo2022provable} works towards a theory of membership inference privacy (MIP). Their notion of $\eta$-MIP is similar to our notion of $\eps$-PMP in terms of being average case over the uniformly random draw of the training data, and worst-case over outcomes. However, their MIP notion is fundamentally weaker than our PMP notion. In particular, it is easy to see that the following blatantly non-private algorithm satisfies $\eta$-MIP but does not satisfy $\eps$-PMP for any $\eps < \infty$: $\alg$ releases a training example $D_1$ with probability $n\eta$ and otherwise outputs \textit{NULL}. Thus, their MIP notion may not be strong enough to offer the meaningful and intuitive membership privacy guarantees that we desire. Moreover, PMP implies MIP, as the following lemma shows:
\begin{lemma}
\label{lem: PMP implies izzo}
If $\alg$ is $\eps$-PMP, then $\alg$ is $\frac{1 - e^{-\eps}}{2}$-MIP. 
\end{lemma}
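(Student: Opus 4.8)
The plan is to unpack the membership-inference game defining $\eta$-MIP, reduce the adversary's advantage to distinguishing the two mixture distributions $\prin$ and $\prout$ associated with a candidate target point, and then apply the one-sided multiplicative bound supplied by $\eps$-PMP. Recall the $\eta$-MIP game: a bit $b \sim \mathrm{Bernoulli}(1/2)$ is drawn; if $b=1$ the target $t$ is a uniformly random member of the (random) training set $D$, while if $b=0$ the target $t$ is drawn independently of $D$; the adversary observes $(t, \alg(D))$ and outputs a guess $\widehat{b}$, and the MIP advantage is $\pr[\widehat{b}=1 \mid b=1] - \pr[\widehat{b}=1 \mid b=0]$. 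First I would fix the observed target value $x \in X$ and note that a worst-case adversary is a threshold test on $\alg(D)$, i.e.\ it accepts membership exactly when $\alg(D) \in S_x$ for some set $S_x \subseteq \ZZ$ depending on $x$; this Neyman--Pearson reduction lets me replace the advantage by $\prin(S_x) - P_0(S_x)$, where $P_0$ is the conditional output law under $b=0$ and $\prin$ is the conditional output law under $b=1$ (target is a member), for that $x$.

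The crucial observation is that under $b=0$ the target is independent of $D$, so conditioned on $t=x$ the output $\alg(D)$ still follows the unconditional law of $\alg$ over a uniformly random $D$. Because exactly half of the $n$-subsets of $X$ contain $x$ (that is, $|\xin| = |\xout|$), this unconditional law is the balanced mixture $P_0 = \tfrac12(\prin + \prout)$. Substituting this identity gives, for every $x$ and every acceptance set $S_x$,
\[
\prin(S_x) - P_0(S_x) = \tfrac12\bigl(\prin(S_x) - \prout(S_x)\bigr),
\]
so the entire membership advantage is exactly one half of the in/out gap that PMP controls --- this is where the factor $1/2$ in the target bound comes from.

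It then remains to bound $\prin(S_x) - \prout(S_x)$. Here I would invoke $\eps$-PMP in its conditional-distribution form (the $\delta=0$ case of the definition, equivalently \cref{eq: b} of Lemma~\ref{lem: equivalent definitions of PMP}), which gives $\prout(S_x) \ge e^{-\eps}\,\prin(S_x)$ for all $x$ and $S_x$. Hence
\[
\prin(S_x) - \prout(S_x) \le \bigl(1 - e^{-\eps}\bigr)\,\prin(S_x) \le 1 - e^{-\eps},
\]
using $\prin(S_x) \le 1$. Combining with the factor-$1/2$ reduction yields advantage at most $\tfrac{1-e^{-\eps}}{2}$ for every target $x$, and since this holds pointwise in $x$ it bounds both the worst-case and the averaged advantage, establishing $\frac{1-e^{-\eps}}{2}$-MIP.

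The main obstacle I anticipate is not the PMP inequality (which is immediate) but correctly formalizing the game so that the $b=0$ null distribution is the balanced marginal $\tfrac12(\prin+\prout)$ rather than $\prout$: this hinges on the target being drawn independently of $D$ under $b=0$, and on the symmetry count $|\xin|=|\xout|$ that gives the mixture weight $1/2$. A secondary point to handle carefully is the reduction to deterministic threshold tests together with the case of randomized adversaries; but since the advantage is linear in the acceptance probabilities, averaging over the adversary's internal randomness cannot increase it, so restricting to deterministic $S_x$ is without loss of generality.
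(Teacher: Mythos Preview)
Your argument is correct and reaches the same bound, but the route differs from the paper's. The paper works through the \emph{posterior} characterization of PMP (inequality~\cref{eq: a} of Lemma~\ref{lem: equivalent definitions of PMP}): writing $\Delta=1-e^{-\eps}$, the ratio bound $1-\Delta\le \pr(x\in D\mid \alg(D)=a)/\pr(x\notin D\mid \alg(D)=a)\le 1/(1-\Delta)$ is fed into an argument borrowed from \citet[Theorem~7]{izzo2022provable} to obtain the pointwise estimate $\max\{\pr(x\in D\mid a),\,\pr(x\notin D\mid a)\}\le(1+\Delta)/2$, which is then integrated against $\pr(\alg(D)=a)$ to bound the Bayes-optimal success probability by $\tfrac12+\tfrac{\Delta}{2}$. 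You instead cast the problem as a Yeom-style hypothesis test, reduce the optimal adversary's advantage to $\tfrac12\bigl(\prin(S_x)-\prout(S_x)\bigr)$, and bound that gap directly via the conditional-law form of PMP (inequality~\cref{eq: b}). Your approach is more self-contained---it avoids the external appeal to Izzo's Theorem~7---and makes the origin of the factor $1/2$ explicit. One caution worth recording: your null law $P_0=\tfrac12(\prin+\prout)$ arises because you model the $b=0$ target as \emph{independent} of $D$, whereas the paper's proof (and, from its final line, apparently Izzo's own definition) contrasts $x\in D$ against $x\notin D$ directly, i.e.\ $\prin$ versus $\prout$; the two framings coincide numerically here since both collapse to $\tfrac12\sup_S(\prin(S)-\prout(S))$, but it is worth checking Izzo's precise definition before asserting that your game is the one being bounded.
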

\begin{proof}
Let $\Delta = 1 - e^{-\eps} \in [0, 1).$ Assume for concreteness that $\alg$ is discrete. (A similar argument works if $\alg$ is continuous.) Then since $\alg$ is $\eps$-PMP, we have \[
1 - \Delta \leq \frac{\pr(x \in D | \alg(D) = a)}{\pr(x \notin D | \alg(D) = a)} \leq \frac{1}{1 - \Delta}
\] 
for almost every $a \in \ZZ$, where $\ZZ$ denotes the range of $\alg(D)$. By the proof of \cite[Theorem 7]{izzo2022provable}, we get \[
\max\left(\pr(x \in D | \alg(D) = a), \pr(x \notin D | \alg(D) = a)\right) \leq \frac{1+\Delta}{2}
\]
and 
\begin{align*}
\int_{\ZZ} \max\left(\pr(x \in D | \alg(D) = a), \pr(x \notin D | \alg(D) = a)\right) \pr(\alg(D) = a)
&\leq \frac{1 + \Delta}{2} \int_{\ZZ} P(\alg(D) = a) \\
&= \frac{1 + \Delta}{2}.
\end{align*}
By the definition of $\eta$-MIP, the above inequality implies that $\alg$ is $\eta$-MIP for $\eta = \Delta/2 = (1 - e^{-\eps})/2$.
\end{proof}

Finally, the concurrent and independent work of~\citet{leemann2023gaussian} proposes a Gaussian-DP analog of the membership inference privacy (MIP) notion. They show how to implement their Gaussian MIP with noisy SGD and give a novel MIA based on their MIP notion.

\section{Proofs of Theoretical Results}
In this Appendix, we re-state and prove our theoretical results. First, we show that PMP satisfies post-processing. 
\begin{lemma}[Post-processing property of PMP]
Let $\alg: \XX^n \to \ZZ$ be $(\eps, \delta)$-PMP. If $f: \ZZ \to \mathcal{Y}$ is any function, then $f \circ \alg: \XX^n \to \mathcal{Y}$ is $(\eps, \delta)$-PMP. 
\end{lemma}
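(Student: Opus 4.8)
The plan is to reduce the post-processing claim for $f \circ \alg$ directly to the PMP guarantee already assumed for $\alg$, via the standard preimage trick. I would fix an arbitrary parent set $X \in \XX^{2n}$, a target point $x \in X$, and a measurable subset $T \subseteq \YY$; the goal is then to verify the two inequalities of Definition~\ref{def: PMP} for the composed mechanism $f \circ \alg$ at the triple $(X, x, T)$. Since $X$, $x$, and $T$ are arbitrary, establishing the inequalities for one such fixed triple suffices to conclude $(\eps, \delta)$-PMP.

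First I would set $S := f^{-1}(T) = \{z \in \ZZ : f(z) \in T\}$, the preimage of $T$ under $f$. Assuming $f$ is measurable (as is implicit whenever one speaks of the output distribution of $f \circ \alg$), $S$ is a measurable subset of $\ZZ$. The crucial observation is the pointwise event equivalence $(f \circ \alg)(D) \in T \iff \alg(D) \in S$, valid for every realization of the internal randomness of $\alg$ and every draw of $D$. Because $f$ acts only on the output of $\alg$ and does not touch the random variable $D$, conditioning on membership of $x$ commutes with post-processing, so both conditional output distributions are preserved exactly: $\pr((f\circ\alg)(D) \in T \mid x \in D) = \pr(\alg(D) \in S \mid x \in D)$ and likewise for the complementary conditioning, where in each case the probability is over $D \sim \textbf{Unif}(\{E \subset X : |E| = n\})$ and the randomness of $\alg$.

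Next I would invoke the $(\eps, \delta)$-PMP guarantee of $\alg$ with respect to $X$ applied to the measurable set $S$ (legitimate precisely because PMP quantifies over all measurable subsets of $\ZZ$), which gives
\begin{align*}
e^{-\eps}\left(\pr(\alg(D) \in S \mid x \notin D) - \delta\right)
&\leq \pr(\alg(D) \in S \mid x \in D) \\
&\leq e^{\eps}\,\pr(\alg(D) \in S \mid x \notin D) + \delta.
\end{align*}
Substituting the identities from the previous step converts these into exactly the PMP inequalities for $f \circ \alg$ at $(X, x, T)$. As the triple was arbitrary, $f \circ \alg$ is $(\eps, \delta)$-PMP with respect to every $X \in \XX^{2n}$, which is the claim.

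I do not expect any genuine obstacle: the argument is the verbatim PMP analogue of the classical DP post-processing proof, and the only point requiring mild care is the measurability of $S = f^{-1}(T)$, which holds as soon as $f$ is measurable. If one wished to allow $f$ to be a randomized map, the same conclusion would follow by conditioning on the internal randomness of $f$ and averaging, using that both sides of the PMP inequalities are affine in the output distribution; but for the stated (deterministic) $f$ the preimage reduction is immediate.
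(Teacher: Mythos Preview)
Your proof is correct and follows essentially the same approach as the paper: both reduce to the PMP guarantee of $\alg$ via the preimage $S = f^{-1}(T)$ and the identity $(f\circ\alg)(D)\in T \iff \alg(D)\in S$. The only cosmetic difference is that the paper phrases the argument through the equivalent averaging characterization of PMP (Lemma~\ref{lem: equivalent definitions of PMP}), whereas you work directly with the conditional-probability form of Definition~\ref{def: PMP}; these are the same computation.
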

\begin{proof}
Let $S \subset \mathcal{Y}$ be measurable, $X \in \XX^{2n}$, $x \in X$, $N = |\xin| = |\xout|$, where $\xin$ and $\xout$ are defined in Lemma~\ref{lem: equivalent definitions of PMP}. Assume w.l.o.g. that $f$ is deterministic. (If $f$ is randomized, then we can reduce to the deterministic case by considering convex combinations.) Let $T_S := \{z \in \ZZ : f(z) \in S\} = f^{-1}(S)$. Note that for any $D \in \xin$, there exists a $D' \in \xout$ that is adjacent to $D$: if $x = D_i$, take $D' = (D_1, \ldots, D_{i-1}, x', D_{i+1}, \ldots, D_n)$ for some $x' \in X \setminus D$. 
Then, by Lemma~\ref{lem: equivalent definitions of PMP}, we have
\begin{align*}
\frac{1}{N}\sum_{D \in \xin} \pr_{\alg}(f \circ \alg(D) \in S) &= \frac{1}{N}\sum_{D \in \xin} \pr_{\alg}(\alg(D) \in T_S) \\
&\leq \frac{1}{N} \sum_{D' \in \xout} e^{\eps} \pr_{\alg}(\alg(D') \in T_S)  + \frac{\delta}{2}\\
&= \delta/2 + e^\eps \frac{1}{N}\sum_{D' \in \xout} \pr_{\alg}(f \circ \alg(D') \in S).
\end{align*}
By Lemma~\ref{lem: equivalent definitions of PMP}, we conclude that  $f \circ \alg$ is $(\eps, \delta)$-PMP. 
\end{proof}

\begin{lemma}
[Re-statement of Lemma~\ref{lem: equivalent definitions of PMP}]
Let $X \in \XX^{2n}$, $x \in X$, $\xin := \{D \subset X : |D| = n, x \in X\}$, and $\xout = \{D \subset X : |D| = n, x \notin X\}$. Let $S \subset \ZZ$ be a measurable set. 
If
\begin{align}
\label{eq: a app}
e^{-\eps}\left(\pr(x \notin D | \alg(D) \in S) - \delta\right) 
\leq \pr(x \in D | \alg(D) \in S)
\leq e^\eps \pr(x \notin D | \alg(D) \in S) + \delta,
\end{align}
then
\begin{align}
\label{eq: b app}
e^{-\eps}\left(\pr(\alg(D) \in S | x \notin D) - 2\delta\right)
\leq \pr(\alg(D) \in S | x \in D ) 
\leq e^\eps \pr( \alg(D) \in S |x \notin D ) + 2\delta.
\end{align}
Also, \cref{eq: b app} holds iff \begin{align}
\label{eq: c app}
e^{-\eps}\left(\frac{1}{N} \sum_{D' \in \xout} \pr_{\alg}(\alg(D') \in S) - \delta \right)
\leq \frac{1}{N} \sum_{D \in \xin} \pr_{\alg}(\alg(D) \in S)
\leq e^\eps \left(\frac{1}{N} \sum_{D' \in \xout} \pr_{\alg}(\alg(D') \in S)\right) + \delta,
\end{align}
where $N := |\xin| = |\xout| = {2n \choose n}/2$ and the probabilities in \cref{eq: c app} are taken solely over the randomness of $\alg$.

Moreover, if $\delta = 0$, then \cref{eq: a app} holds iff \cref{eq: b app} holds iff \cref{eq: c app} holds. Thus, $\alg$ is $\eps$-PMP w.r.t. $X$ iff any of these three inequalities holds for all $x \in X$ and all $S \subset \mathcal{Z}$. 
\end{lemma}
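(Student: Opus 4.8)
The whole argument rests on one structural feature of the PMP sampling model: because $D \sim \textbf{Unif}(\{E \subset X : |E| = n\})$ and exactly half of the $\binom{2n}{n}$ size-$n$ subsets contain any fixed $x$, the two ``priors'' are equal, $\pr(x \in D) = \pr(x \notin D) = 1/2$, and $|\xin| = |\xout| = \binom{2n-1}{n-1} = \binom{2n}{n}/2 = N$. My plan is to rewrite each of \cref{eq: a app}, \cref{eq: b app}, \cref{eq: c app} as a statement about the two likelihoods $P := \pr(\alg(D) \in S \mid x \in D)$ and $Q := \pr(\alg(D) \in S \mid x \notin D)$, and then compare.

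The equivalence of \cref{eq: b app} and \cref{eq: c app} I would get from a direct identity. Expanding the joint probability over the uniform draw,
\begin{align*}
P = \frac{\pr(\alg(D) \in S,\, x \in D)}{\pr(x \in D)} = \frac{\binom{2n}{n}^{-1}\sum_{D \in \xin}\pr_{\alg}(\alg(D) \in S)}{1/2} = \frac{1}{N}\sum_{D \in \xin}\pr_{\alg}(\alg(D) \in S),
\end{align*}
and symmetrically $Q = \tfrac{1}{N}\sum_{D' \in \xout}\pr_{\alg}(\alg(D') \in S)$. Substituting these identities rewrites both \cref{eq: b app} and \cref{eq: c app} entirely in terms of $P$ and $Q$, so the two conditions differ only in their additive slack; in particular, at $\delta = 0$ (the regime in which the full biconditional chain is asserted) they become literally identical.

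For \cref{eq: a app} $\Rightarrow$ \cref{eq: b app} I would use Bayes' rule with the equal priors. Writing $R := \pr(\alg(D) \in S)$, the law of total probability gives $R = \tfrac12(P + Q)$, and (when $R > 0$) Bayes gives $\pr(x \in D \mid \alg(D) \in S) = P/(P+Q)$ and $\pr(x \notin D \mid \alg(D) \in S) = Q/(P+Q)$. Multiplying the upper inequality of \cref{eq: a app} through by $P + Q = 2R$ yields $P \le e^{\eps}Q + 2\delta R \le e^{\eps}Q + 2\delta$, using $R \le 1$; the lower inequality is treated identically and produces the $-2\delta$ in \cref{eq: b app}. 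This is precisely where the factor of two appears, since the additive $\delta$ gets scaled by $P + Q = 2R \le 2$.

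The $\delta = 0$ biconditional then falls out: with $\delta = 0$ the prior factors cancel and the posterior odds equal the likelihood ratio, $\pr(x \in D \mid \alg(D) \in S)/\pr(x \notin D \mid \alg(D) \in S) = P/Q$, so \cref{eq: a app} ($e^{-\eps} \le$ odds $\le e^{\eps}$) and \cref{eq: b app} ($e^{-\eps} \le P/Q \le e^{\eps}$) are the same statement, while \cref{eq: b app} $\Leftrightarrow$ \cref{eq: c app} follows from the identity above. I expect the only delicate bookkeeping — and the main place to be careful — to be the accounting of the additive terms (how $P + Q = 2R \le 2$ moves $\delta$ to $2\delta$ between the posterior and likelihood formulations), together with the degenerate event $R = \pr(\alg(D) \in S) = 0$: there the conditionals in \cref{eq: a app} are undefined, but $R = 0$ forces $P = Q = 0$, so \cref{eq: b app} and \cref{eq: c app} hold trivially and this case can be excluded at the outset.
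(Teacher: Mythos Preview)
Your proposal is correct and follows essentially the same route as the paper's proof: apply Bayes' rule with the equal priors $\pr(x\in D)=\pr(x\notin D)=1/2$ to pass between the posterior form \cref{eq: a app} and the likelihood form \cref{eq: b app} (picking up the factor of $2$ on $\delta$ via $P+Q=2R\le 2$), and then identify $P=\tfrac{1}{N}\sum_{D\in\xin}\pr_\alg(\alg(D)\in S)$ and $Q=\tfrac{1}{N}\sum_{D'\in\xout}\pr_\alg(\alg(D')\in S)$ to connect \cref{eq: b app} with \cref{eq: c app}. Your treatment is in fact slightly more careful than the paper's in two places: you explicitly dispose of the degenerate case $R=0$, and you correctly flag that after substitution \cref{eq: b app} and \cref{eq: c app} differ in the additive slack ($2\delta$ versus $\delta$), so that the clean biconditional is only asserted at $\delta=0$.
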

\begin{proof}
Suppose \cref{eq: a app} holds. Then, by Bayes' rule and the fact that $\pr(x \in D) = \pr(x\notin D) = 1/2$, we have \begin{align}
\label{eq: i}
    e^{-\eps}\left(-\delta + \frac{\pr(\alg(D) \in S | x\notin D)}{2 \pr(\alg(D) \in S)} \right) \leq \frac{\pr(\alg(D) \in S | x \in D)}{2 \pr(\alg(D) \in S)} \leq e^\eps \frac{\pr(\alg(D) \in S | x\notin D)}{2 \pr(\alg(D) \in S)} + \delta.    \end{align}
Multiplying \cref{eq: i} by $2\pr(\alg(D) \in S)$ and using the fact that $\pr(\alg(D) \in S) \in [0, 1]$ yields \cref{eq: b app}. 

Next we prove the equivalence between \cref{eq: b app} and \cref{eq: c app}. Observe that \begin{align}
    \pr(\alg(D) \in S | x \in D) &= \pr(\alg(D) \in S | D \in \xin) \\
    &= \frac{\pr(\alg(D) \in S, D \in \xin)}{\pr(D \in \xin)} \\
    &= \frac{\frac{1}{N} \sum_{D \in \xin} \pr_{\alg}(\alg(D) \in S)}{1/2} \\
    &= \frac{2}{N}\sum_{D \in \xin} \pr_{\alg}(\alg(D) \in S).
\end{align}
Similarly, $\pr(\alg(D) \in S | x\notin D) = \frac{2}{N}\sum_{D' \in \xout} \pr_{\alg}(\alg(D') \in S)$. Substituting these equalities into \cref{eq: b app} and then dividing by $2$ yields \cref{eq: c app}. 

Now suppose $\delta = 0$. Then we have already shown that \cref{eq: a app} implies \cref{eq: b app} and that \cref{eq: b app} is equivalent to \cref{eq: c app}. Conversely, if \cref{eq: b app} holds, then by Bayes rule and the fact that $\pr(x \in D) = \pr(x \notin D) = 1/2$, we get 
\begin{align*}
    e^{-\eps} 2 \pr(x \notin D | \alg(D) \in S) \pr(\alg(D) \in S) \leq 2 \pr(x \in D | \alg(D) \in S) \pr(\alg(D) \in S) \leq e^{\eps} 2 \pr(x \notin D | \alg(D) \in S) \pr(\alg(D) \in S).
\end{align*}
If $\pr(\alg(D) \in S) > 0$, then dividing the above by $2 \pr(\alg(D) \in S)$ implies that \cref{eq: a app} holds. This completes the proof. 
\end{proof}

\begin{corollary}[Re-statement of Corollary~\ref{coro: DP = PMP}]
If $n=1$, then $\alg$ is $(\eps, \delta)$-DP iff $\alg$ is $(\eps, 2 \delta)$-PMP w.r.t. $X$ for every $X \in \XX^{2n}$.    
\end{corollary}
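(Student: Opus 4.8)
The plan is to unwind both definitions in the degenerate case $n=1$ and observe that they collapse onto the same condition on pairs of singletons. When $n=1$ the parent set is just $X=\{a,b\}$ for two distinct points, $D\sim\mathrm{Unif}(\{\{a\},\{b\}\})$, and the combinatorial constant of Lemma~\ref{lem: equivalent definitions of PMP} becomes $N=\binom{2}{1}/2=1$, with $\xin=\{\{a\}\}$ and $\xout=\{\{b\}\}$ when $x=a$ (symmetrically for $x=b$). First I would record the resulting simplifications: since exactly one size-$n$ subset of $X$ contains $x$ and exactly one omits it, the conditional distributions degenerate to $\pr(\alg(D)\in S\mid x\in D)=\pr_\alg(\alg(\{a\})\in S)$ and $\pr(\alg(D)\in S\mid x\notin D)=\pr_\alg(\alg(\{b\})\in S)$, so each average in \cref{eq: c} is a single term. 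Next I would identify DP for $n=1$ with local differential privacy: any two distinct singletons $\{a\},\{b\}\in\XX^1$ differ in their only record and are adjacent, so $(\eps,\delta)$-DP is exactly the requirement that $\pr_\alg(\alg(\{a\})\in S)\le e^\eps\pr_\alg(\alg(\{b\})\in S)+\delta$ for all $a,b\in\XX$ and all measurable $S$ (the reverse inequality is the same statement with $a,b$ swapped).

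With these two reductions in hand, I would run each implication through the equivalent characterizations of PMP. For DP $\Rightarrow$ PMP, applying the DP inequality to the adjacent pair $\{a\},\{b\}$ in both orders gives precisely the upper and lower bounds of \cref{eq: c} for the single-term sums above, with slack $\delta\le 2\delta$; since \cref{eq: c} is equivalent to \cref{eq: b}, which is the definition of PMP, this yields $(\eps,2\delta)$-PMP w.r.t.\ every $X$. For the converse, $(\eps,2\delta)$-PMP w.r.t.\ every $X=\{a,b\}$ is, via the equivalence of \cref{eq: b} and \cref{eq: c}, exactly the chain $e^{-\eps}\!\left(\pr_\alg(\alg(\{b\})\in S)-2\delta\right)\le \pr_\alg(\alg(\{a\})\in S)\le e^\eps\pr_\alg(\alg(\{b\})\in S)+2\delta$ for all $a,b,S$, which is a DP-type condition between the singleton pair; ranging over all $X$ recovers the DP inequality for every adjacent pair.

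The main subtlety, and where I would be most careful, is the $\delta$ bookkeeping. The general Lemma loses a factor of $2$ only when passing between the posterior form \cref{eq: a} and the likelihood form \cref{eq: b}, via Bayes' rule and the bound $\pr(\alg(D)\in S)\le 1$. But DP and the definition of PMP are \emph{both} likelihood-form (\cref{eq: b}/\cref{eq: c}), so for $n=1$ no Bayes inversion is needed and the two notions in fact coincide with the \emph{same} $\delta$. Consequently the stated $2\delta$ is a conservative bound: the forward direction already holds with $\delta$ (hence trivially with $2\delta$), and in verifying the converse I would track the constant explicitly to confirm that the equivalence of \cref{eq: b} and \cref{eq: c} is applied with matching $\delta$ and that no stray factor of $2$ is introduced or dropped. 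I expect this constant-chasing, rather than any structural difficulty, to be the only delicate point of the argument.
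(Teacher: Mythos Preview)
Your proposal is correct and follows essentially the same route as the paper: observe that for $n=1$ one has $N=1$, so the sums in \cref{eq: c} collapse to the single terms $\pr_\alg(\alg(\{a\})\in S)$ and $\pr_\alg(\alg(\{b\})\in S)$, which is exactly the $(\eps,\delta)$-DP condition for the adjacent singletons, and then invoke the equivalence of \cref{eq: b} and \cref{eq: c} from Lemma~\ref{lem: equivalent definitions of PMP} to pass to PMP. Your extra observation that the factor $2$ is slack here---because for $n=1$ the PMP likelihood form and the DP condition coincide directly without any Bayes inversion---is correct and a nice sharpening that the paper does not spell out.
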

\begin{proof}
If $n=1$, then $N=1$ and the sums in \cref{eq: c} are each only over one term. Thus, \cref{eq: c} holds for all $X = \{x, x'\} \in \XX^2$ iff \begin{align}
    e^{-\eps} (\pr(\alg(x') \in S) - \delta) \leq \pr(\alg(x) \in S) \leq e^{\eps} \pr(\alg(x') \in S) + \delta
\end{align}
iff $\alg$ is $(\eps, \delta)$-DP. By \cref{lem: equivalent definitions of PMP}, this condition is also equivalent to $\alg$ being $(\eps, 2\delta)$-PMP w.r.t. $X$ for every $X \in \XX^2$. 
\end{proof}

\begin{proposition}[Re-statement of Proposition~\ref{prop: PMP is weaker than DP}]
If $\alg$ is $\eps$-DP, then $\alg$ is $\eps$-PMP. Moreover, if $n > 2$, then there exists an $\ln(2)$-PMP $\alg$ that is not $\eps'$-DP for any $\eps' < \infty$. 
\end{proposition}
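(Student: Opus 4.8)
For the first claim (that $\eps$-DP implies $\eps$-PMP), the plan is to argue through the averaged characterization \cref{eq: c} of pure PMP from Lemma~\ref{lem: equivalent definitions of PMP} rather than the conditional form. Fix a parent set $X$, a target $x \in X$, and a measurable $S \subseteq \ZZ$, and consider the bipartite graph on $\xin \cup \xout$ that joins $D \in \xin$ to $D' \in \xout$ whenever $D$ and $D'$ are adjacent. Each $D \in \xin$ has exactly $n$ neighbours in $\xout$ (delete $x$ and insert one of the $n$ points of $X \setminus D$), and symmetrically each $D' \in \xout$ has exactly $n$ neighbours in $\xin$, so the graph is $n$-regular. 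I would then apply the $\eps$-DP inequality $\pr_{\alg}(\alg(D) \in S) \le e^{\eps}\pr_{\alg}(\alg(D') \in S)$ to every edge and sum over all edges; regularity collapses this to $n \sum_{D \in \xin} \pr_{\alg}(\alg(D) \in S) \le e^{\eps} n \sum_{D' \in \xout} \pr_{\alg}(\alg(D') \in S)$, and dividing by $nN$ yields the upper bound in \cref{eq: c}. The lower bound follows from the symmetric per-edge inequality, and since $X$ is arbitrary we conclude $\eps$-PMP. (One could alternatively extract a perfect matching between $\xin$ and $\xout$ from regularity and pair datasets directly, but the double count avoids invoking Hall's theorem.)

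For the second claim I would exhibit an explicit non-DP algorithm that is nonetheless PMP. To make the quantifier ``for all $X \in \XX^{2n}$'' vacuous, take the data domain to be exactly $\XX = \{1, \dots, 2n\}$, so that the unique parent set of $2n$ distinct points is $X = \XX$ itself and $\eps$-PMP reduces to $\eps$-PMP with respect to this single $X$. On this domain define the deterministic algorithm that outputs a coarse, balanced statistic of $D$, concretely $\alg(D) = |D \cap \{1, \dots, n\}| \bmod 2$. Since $\alg$ is deterministic and non-constant and the adjacency graph on $n$-subsets of $\XX$ is connected, some adjacent pair $D, D'$ satisfies $\alg(D) \ne \alg(D')$; taking $S$ to be the singleton value $\alg(D)$ gives $\pr(\alg(D) \in S) = 1$ and $\pr(\alg(D') \in S) = 0$, so $\alg$ is not $\eps'$-DP for any $\eps' < \infty$.

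To finish I would verify $\ln 2$-PMP through \cref{eq: c}: for each target $x$ and each output bit $b$ it suffices that the number of $D \in \xin$ with $\alg(D) = b$ and the number of $D' \in \xout$ with $\alg(D') = b$ agree up to a factor of $2$. Splitting on $j = |D \cap (\{1, \dots, n\} \setminus \{x\})|$ and imposing the parity constraint expresses each count as a sum of products of two binomial coefficients (the cases $x \le n$ and $x > n$ being symmetric), and a direct computation bounds the ratio by $2$ for every $n > 2$; for instance at $n = 3$ the two relevant counts are $6$ and $4$, giving the slightly sharper $\ln(3/2)$-PMP. I expect controlling this binomial ratio uniformly in $n$ to be the main obstacle. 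The conceptual difficulty it encodes is that the released statistic must be \emph{leaky enough} to violate DP yet \emph{balanced enough} that every attainable output is produced by datasets both containing and excluding $x$: were any output attainable only from $\xin$, it would certify membership of $x$ and make PMP fail for every finite $\eps$. The parity statistic is designed precisely to thread this needle, and the same template extends to all $n > 2$.
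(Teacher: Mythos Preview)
Your argument for the first implication is essentially the paper's: both hinge on the $n$-regular bipartite adjacency structure between $\xin$ and $\xout$ and apply the $\eps$-DP inequality edgewise. The paper rewrites $\sum_{D'\in\xout}\pr(\alg(D')\in S)$ as $\frac{1}{n}\sum_{D\in\xin}\sum_{D'\sim D}\pr(\alg(D')\in S)$ and then bounds the numerator by $e^\eps$ times the minimum neighbour; your double-count over edges is a slightly cleaner packaging of the same idea.

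Your construction for the second claim is genuinely different from the paper's. The paper takes $\XX=\{0,\dots,2n-1\}$ and $\alg(D)=\sum_{i\in D} i \bmod 2n$, verifies the $n=3$ case by direct enumeration (obtaining the exact ratio $2$, hence $\ln 2$-PMP), and asserts the extension to larger $n$. Your parity statistic $\alg(D)=|D\cap\{1,\dots,n\}|\bmod 2$ is simpler and in fact sharper: at $n=3$ it yields $\ln(3/2)$ rather than $\ln 2$. The tradeoff is that the paper's enumeration is mechanical, whereas your route requires the uniform binomial-ratio bound you flag as the main obstacle.

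That obstacle is lighter than you anticipate, and you need not attack the binomial sums head-on. Observe that $D\mapsto X\setminus D$ is a bijection $\xin\to\xout$ with $\alg(X\setminus D)=(n-\alg(D))\bmod 2$. Hence for even $n$ the output multisets over $\xin$ and $\xout$ coincide and your algorithm is $0$-PMP; for odd $n$ the two output counts are merely swapped, so the PMP ratio equals the internal ratio $S_{\text{in}}^1/S_{\text{in}}^0$ within $\xin$. Writing $N=\binom{2n-1}{n-1}$ and using the signed Vandermonde identity, $S_{\text{in}}^1-S_{\text{in}}^0=[x^{n-1}](1+x)(1-x^2)^{n-1}=\pm\binom{n-1}{(n-1)/2}$, so the ratio is $\frac{N+|c|}{N-|c|}$ with $|c|=\binom{n-1}{(n-1)/2}$. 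This is maximised at $n=3$ (value $3/2$) and decreases to $1$, so $\ln 2$-PMP holds for all $n>2$. This symmetry argument replaces the case-by-case computation and closes the gap you identified.
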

\begin{proof}
The first statement is a consequence of Lemma~\ref{lem: equivalent definitions of PMP} and uses arguments from the proof of~\cite[Proposition 6]{izzo2022provable}. Let $X \in \XX^{2n}$ consist of $2n$ distinct points, let $x \in X$ and $S \subset \ZZ$. Let $\alg$ be $\eps$-DP. By Lemma~\ref{lem: equivalent definitions of PMP}, we have \begin{align*}
\frac{\pr(\alg(D) \in S | x \notin D)}{\pr(\alg(D) \in S | x \in D)} &= \frac{\sum_{D \in \xin} \pr_{\alg}(\alg(D) \in S)}{\sum_{D' \in \xout} \pr_{\alg}(\alg(D') \in S)}.
\end{align*}
Now, for any $D = (D_1, \ldots, D_n) \in \xin$, there is a unique $i \in [n]$ such that $D_i = x$. Let $x' \in X \setminus D$ and $D' := (D_1, \ldots, D_{i-1}, x', D_{i+1}, \ldots, D_n)$, which is a neighboring data set of $D$ (i.e. $D \sim D'$) and $D' \in \xout$. Note that there are $n$ choices for $x'$. Thus, we can see that $D$ has $n$ neighboring data sets in $\xout$. Similarly, every $D' \in \xout$ has $n$ neighbors in $\xin$. Thus, \[
n \sum_{D' \in \xout} \pr_{\alg}(\alg(D') \in S) = \sum_{D \in \xin} \sum_{\substack{D' \in \xout \\ D' \sim D}} \pr_{\alg}(\alg(D') \in S).
\]
This implies \begin{align*}
\frac{\pr(\alg(D) \in S | x \notin D)}{\pr(\alg(D) \in S | x \in D)} &= \frac{\sum_{D \in \xin} \pr_{\alg}(\alg(D) \in S)}{\sum_{D' \in \xout} \pr_{\alg}(\alg(D') \in S)} \\
&=\frac{\sum_{D \in \xin} \pr_{\alg}(\alg(D) \in S)}{\frac{1}{n} \sum_{D \in \xin} \sum_{D' \in \xout, D' \sim D}\pr(\alg(D') \in S)} \\
&\leq \frac{e^\eps \sum_{D \in \xin} \min_{D' \in \xout, D' \sim D} \pr_{\alg}(\alg(D') \in S)}{\sum_{D \in \xin} \text{Average}_{D' \in \xout, D' \sim D} \pr_{\alg}(\alg(D') \in S)} \\
&\leq e^{\eps}. 
\end{align*}
A similar argument proves the other inequality. 

For the second statement, assume for simplicity that $n=3$. It will be easy to see that our construction extends to $n > 3$ (and indeed the PMP parameter can be reduced for $n > 3$, giving a stronger result). Let $\XX = \{0, 1, 2, 3, 4, 5\}$, and $X = (0, 1, 2, 3, 4, 5)$. 
Define $\alg(D) = \text{sum}(D)~(\text{mod}~6)$ as the modular addition operator. First, $\alg$ is clearly not $\eps'$-DP for any $\eps' < \infty$ since $\alg$ is not randomized. Concretely, if $\alg(D) = 0$ for some $D \in \XX^n$, then replacing $D_1$ by $D_1' = D_1 + 1~(\text{mod} 6)$ and letting $D' = (D_1', D_2, \ldots, D_n)$ implies that $\alg(D') = 1$; hence the privacy loss is infinite. 

Next, $\alg$ is $\ln(2)$-PMP. To see this, let $x = 0$ and compute $\max_{a \in \XX} \sum_{D \in \xin} \pr_{\alg}(\alg(D) = a) = \max_{a \in \XX} |\{D \in \xin : \sum(D) = a~(\text{mod}~6)\}| = 2$. On the other hand, $\min_{a \in \XX} \sum_{D \in \xout} \pr_{\alg}(\alg(D) = a) = \min_{a \in \XX} |\{D \in \xout : \sum(D) = a~(\text{mod}~6)\}| = 1$. 
By symmetry and Lemma~\ref{lem: equivalent definitions of PMP}, $\alg$ is $\eps$-PMP with respect to $X$ if and only if \[
\left|\ln \frac{\sum_{D \in \xin} \pr(\alg(D) = a)}{\sum_{D' \in \xout} \pr(\alg(D') = a)} \right| \leq \eps
\]
for all $a \in \XX$. By the above computations, we see that this holds only if $\eps \geq \ln(2)$. 
\end{proof}

\begin{lemma}[Re-statement of Lemma~\ref{lem: attack success rate}]
Let $\alg$ be $\eps$-PMP with respect to $X$ and $\mathcal{M}$ be any practical MIA. Then, the probability that $\mathcal{M}$ successfully infers membership, for any $x \in X$, never exceeds $1/(1 + e^{-\eps})$. 
\end{lemma}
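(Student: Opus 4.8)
The plan is to reduce an arbitrary practical MIA to the Bayes-optimal (MAP) membership predictor and then bound the latter using the posterior characterization of $\eps$-PMP from \cref{lem: equivalent definitions of PMP}. First I would formalize the attacker as a (possibly randomized) map that, given the observed output $a = \alg(D)$ and the target $x$ (together with knowledge of $X$ and the uniform draw of $D$), outputs a guess in $\{\text{in},\text{out}\}$. Writing $g(a) := \pr(\mathcal{M}\text{ guesses }x \in D \mid \alg(D) = a)$ for the attacker's conditional guessing probability, the crucial observation is that, given the output $a$, the guess is conditionally independent of the true membership status, since $\mathcal{M}$ sees only $a$ and its own independent randomness. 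Hence the conditional success probability factors as
\[
\pr(\text{correct} \mid \alg(D) = a) = g(a)\,\pr(x \in D \mid \alg(D) = a) + \big(1 - g(a)\big)\,\pr(x \notin D \mid \alg(D) = a),
\]
a convex combination in $g(a) \in [0,1]$ that is maximized by the MAP rule; thus for every $a$ it is at most $\max\{\pr(x \in D \mid \alg(D) = a),\, \pr(x \notin D \mid \alg(D) = a)\}$.

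Second, I would invoke \cref{lem: equivalent definitions of PMP} with $\delta = 0$, applied to singleton outcomes (or, in the continuous case, to shrinking neighborhoods of $a$), to pass from the likelihood form of $\eps$-PMP to the posterior form in~\eqref{eq: a}: for almost every $a$,
\[
e^{-\eps}\,\pr(x \notin D \mid \alg(D) = a) \leq \pr(x \in D \mid \alg(D) = a) \leq e^{\eps}\,\pr(x \notin D \mid \alg(D) = a).
\]
Writing $p := \pr(x \in D \mid \alg(D) = a)$ and $1 - p = \pr(x \notin D \mid \alg(D) = a)$, the bounds $e^{-\eps}(1-p) \leq p \leq e^{\eps}(1-p)$ rearrange to $\frac{1}{1+e^{\eps}} \leq p \leq \frac{1}{1+e^{-\eps}}$, so that $\max\{p,\, 1-p\} \leq \frac{1}{1+e^{-\eps}}$.

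Finally, chaining the two displays and taking expectation over the random output $a \sim \alg(D)$ gives $\psucc \leq \frac{1}{1+e^{-\eps}}$, proving the claim. I expect the main obstacle to be the first step: precisely modeling a ``practical MIA'' and justifying the conditional-independence factorization of its success probability (i.e., that the guess depends on $D$ only through the observed output and independent internal randomness), together with the measure-theoretic care required to convert the set-wise $\eps$-PMP inequality into the pointwise posterior-ratio bound used above. The subsequent algebra and the passage to the Bayes-optimal predictor are routine.
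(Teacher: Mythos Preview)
Your proposal is correct and follows essentially the same approach as the paper: reduce to the Bayes-optimal (MAP) attacker, then use the $\eps$-PMP posterior ratio bound to cap $\max\{p,1-p\}$ by $1/(1+e^{-\eps})$. The paper's own argument is terser---it works with a set $S$ and directly applies Bayes' rule at the worst-case likelihood ratio $e^{\eps}$, rather than explicitly invoking \cref{lem: equivalent definitions of PMP} pointwise and integrating over outcomes---but the substance is the same, and your more careful formalization of the attacker and the conditional-independence step only strengthens the write-up.
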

\begin{proof}
Suppose $\pr(\alg(D) \in S | x\in D) = e^\eps \pr(\alg(D') \in S | x \notin D')$ for some $S \subset \ZZ$ and $x \in X$. Then any practical MIA's success probability is upper bounded by $\max\left(\pr(x \in D | \alg(D) \in S), \pr(x \notin D | \alg(D) \in S)\right)$, which corresponds to the success probability of the Bayes optimal practical MIA, $\mathcal{M}^*$. Assume w.l.o.g. that $\max\left(\pr(x \in D | \alg(D) \in S), \pr(x \notin D | \alg(D) \in S)\right) = \pr(x \in D | \alg(D) \in S)$.  Then, \begin{align*}
\pr(\mathcal{M}^* ~\text{is correct}) &\leq \pr(x \in D | \alg(D) \in S) \\
&= \frac{\pr(\alg(D) \in S | x \in D) \pr(x\in D)}{\pr(\alg(D) \in S)} \\
&= \frac{e^\eps \pr(\alg(D) \in S | x\notin D) \times 1/2}{(1/2) \left(\pr(\alg(D) \in S | x\in D) + \pr(\alg(D) \in S | x\notin D) \right)} \\
&= \frac{e^\eps}{1 + e^\eps} = \frac{1}{1 + e^{-\eps}}.
\end{align*}
\end{proof}

\begin{proposition}[Re-statement of Proposition~\ref{prop: exponential mechanism PMP}]
Let $X \in \XX^{2n}$. The $\eps$-DP exponential mechanism is $\tilde{\eps}(X)$-PMP with respect to $X$ if and only if \[
\tilde{\eps}(X) \geq \ln\left[\frac{\sum_{D \in \xin} c(D) \exp\left(-\frac{\eps}{2 \Delta_\ell} \ell(w,D) \right)}{\sum_{D' \in \xout} c(D') \exp\left(-\frac{\eps}{2 \Delta_\ell} \ell(w,D')\right)} \right]
\]
for all $w \in \WW$ and $x \in X$,
where $c(D) = \left[\sum_{w' \in \WW} \exp\left(\frac{- \eps \ell(w', D)}{2\Delta_{\ell}} \right)\right]^{-1}$ and $c(D')$ is defined similarly. 
\end{proposition}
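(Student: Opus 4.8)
The plan is to read off the pure-PMP condition from the equivalent characterization in \cref{lem: equivalent definitions of PMP} and then substitute the explicit output law of the exponential mechanism. Since the statement concerns pure PMP ($\delta = 0$), \cref{lem: equivalent definitions of PMP} (via \cref{eq: c}) tells us that $\alg_E$ is $\tilde{\eps}(X)$-PMP with respect to $X$ if and only if, for every $x \in X$ and every measurable $S \subseteq \ZZ$,
\begin{equation*}
e^{-\tilde{\eps}(X)}\,\frac{1}{N}\sum_{D'\in\xout}\pr_{\alg}(\alg_E(D')\in S) \;\le\; \frac{1}{N}\sum_{D\in\xin}\pr_{\alg}(\alg_E(D)\in S) \;\le\; e^{\tilde{\eps}(X)}\,\frac{1}{N}\sum_{D'\in\xout}\pr_{\alg}(\alg_E(D')\in S).
\end{equation*}
So the whole task reduces to finding, for each fixed $x$, the tightest constant controlling the two-sided gap between the averaged measures $\prin$ and $\prout$ over all events $S$.

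First I would reduce the quantification over all $S$ to singletons. Because $\WW$ is finite, $\prin(S)=\sum_{w\in S}\prin(\{w\})$ and likewise for $\prout$, so $\prin(S)/\prout(S)$ is a ratio of nonnegative sums. By the mediant inequality, any such ratio lies between $\min_{w\in S}\prin(\{w\})/\prout(\{w\})$ and $\max_{w\in S}\prin(\{w\})/\prout(\{w\})$; hence the extreme values of $\prin(S)/\prout(S)$ over all $S$ are attained at singletons $S=\{w\}$. Consequently the displayed two-sided condition holds for all $S$ if and only if it holds for all $w\in\WW$. This singleton reduction is the main technical step.

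Next I would substitute the exponential mechanism's output probabilities $\pr_{\alg}(\alg_E(D)=w)=c(D)\exp\!\left(-\tfrac{\eps}{2\Delta_\ell}\ell(w,D)\right)$ into $\prin(\{w\})=\frac{1}{N}\sum_{D\in\xin}\pr_{\alg}(\alg_E(D)=w)$ and into $\prout(\{w\})$. The factors $\frac{1}{N}$ cancel in the ratio, giving
\begin{equation*}
\frac{\prin(\{w\})}{\prout(\{w\})} = \frac{\sum_{D\in\xin} c(D)\exp\!\left(-\frac{\eps}{2\Delta_\ell}\ell(w,D)\right)}{\sum_{D'\in\xout} c(D')\exp\!\left(-\frac{\eps}{2\Delta_\ell}\ell(w,D')\right)}.
\end{equation*}
The singleton condition $e^{-\tilde{\eps}(X)}\le \prin(\{w\})/\prout(\{w\}) \le e^{\tilde{\eps}(X)}$ for all $w$ and $x$ is then precisely the requirement that $\tilde{\eps}(X)$ dominate the logarithm of this ratio for all $w\in\WW$ and $x\in X$, which is the claimed characterization.

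The point that needs the most care is that PMP is a \emph{two-sided} requirement: the right inequality forces $\tilde{\eps}(X)$ to dominate $\ln[\prin(\{w\})/\prout(\{w\})]$, while the left inequality forces it to dominate the logarithm of the reciprocal ratio. Collecting these constraints over all $w$ and $x$ yields a lower bound on $\tilde{\eps}(X)$, and I expect the main obstacle to be verifying that the displayed (forward) ratio is the binding direction, so that taking the maximum of $\ln[\prin(\{w\})/\prout(\{w\})]$ over all $w\in\WW$ and $x\in X$ already accounts for the reciprocal constraint. This last point should be argued explicitly rather than assumed, since in general the forward and reverse ratios need not attain their extrema at the same $(x,w)$.
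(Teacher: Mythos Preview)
Your approach is correct and mirrors the paper's own proof: invoke \cref{lem: equivalent definitions of PMP} to reduce $\tilde\eps(X)$-PMP to the ratio condition on $\prin$ and $\prout$, then substitute the exponential mechanism's output law $\pr_{\alg}(\alg_E(D)=w)=c(D)\exp\!\left(-\tfrac{\eps}{2\Delta_\ell}\ell(w,D)\right)$. You are in fact more careful than the paper in two places: the paper simply asserts the singleton form without the mediant-inequality reduction you supply, and---exactly as you suspected---it also writes only the forward ratio and never addresses the reverse direction, so your concern about the two-sided constraint is well-founded (the clean statement would require $\tilde\eps(X)\ge\bigl|\ln[\cdot]\bigr|$, though there is no general symmetry over $x\in X$ that collapses the two directions into one).
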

\begin{proof}
By Lemma~\ref{prop: exponential mechanism PMP}, the $\eps$-DP exponential mechanism is $\tilde{\eps}(X)$-PMP with respect to $X$ if and only if \begin{align*}
\tilde{\eps} &\geq \max_{w \in \WW, x \in X} \ln\left[\frac{\sum_{D \in \xin} \pr_{\alg}(\alg(D) = w)}{\sum_{D' \in \xout} \pr_{\alg}(\alg(D') = w} \right] \\
&=\max_{w \in \WW, x \in X} \ln\left[\frac{\sum_{D \in \xin} c(D) \exp\left(-\frac{\eps}{2 \Delta_\ell} \ell(w,D) \right)}{\sum_{D' \in \xout} c(D') \exp\left(-\frac{\eps}{2 \Delta_\ell} \ell(w,D')\right)} \right].
\end{align*}
\end{proof}

\begin{proposition}[Re-statement of Proposition~\ref{prop: Gauss}]
Let $\alg_G$ be the $(\eps, \delta)$-DP Gaussian mechanism. Then, for any $x \in X$,
\begin{align*}
&\max\left\{\hockey(\prin || \prout), \hockey(\prout || \prin)\right\} \leq \frac{1}{n |\xin|} \sum_{D \in \xin} \sum_{\substack{D' \in \xout \\ D' \sim D}} \\
& \quad \quad
\Bigg[\Phi\left(\frac{\|q(D) - q(D')\|}{2 \sigma} - \frac{\eps \sigma}{\|q(D) - q(D')\|} \right) 
- e^\eps\Phi\left(-\frac{\|q(D) - q(D')\|}{2 \sigma} - \frac{\eps \sigma}{\|q(D) - q(D')\|} \right) \Bigg].
\end{align*} 
\end{proposition}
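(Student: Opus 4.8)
The plan is to reduce the bound on the hockey-stick divergence between the mixtures $\prin$ and $\prout$ to a sum of pairwise divergences between individual Gaussians, using the joint convexity of $\hockey$ (which holds because it is an $f$-divergence) together with the exact Gaussian formula underlying Lemma~\ref{lem: balle}. The immediate obstruction is that $\prin$ is an average of $\alg_G(D)$ over $D \in \xin$ while $\prout$ is an average of $\alg_G(D')$ over $D' \in \xout$, and these are averages over \emph{disjoint} index sets. Joint convexity of an $f$-divergence requires both arguments to be mixtures with the \emph{same} weights, so I cannot apply it to the two mixtures directly. The first task is therefore to re-express both mixtures as averages over a common index set.

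To do this, I would exploit the regular bipartite structure of the adjacency relation between $\xin$ and $\xout$, already established in the proof of Proposition~\ref{prop: PMP is weaker than DP}: every $D \in \xin$ has exactly $n$ neighbors $D' \in \xout$ (obtained by replacing $x$ with one of the $n$ points of $X \setminus D$), and symmetrically every $D' \in \xout$ has exactly $n$ neighbors in $\xin$. Let $E = \{(D, D') : D \in \xin,\, D' \in \xout,\, D \sim D'\}$ be the edge set; then $|E| = n|\xin| = n|\xout| = nN$, and because each $D$ (resp.\ each $D'$) appears in exactly $n$ edges,
\begin{align*}
\prin = \frac{1}{nN} \sum_{(D, D') \in E} \alg_G(D),
\qquad
\prout = \frac{1}{nN} \sum_{(D, D') \in E} \alg_G(D').
\end{align*}
Both mixtures now use the same uniform weights $1/|E|$ indexed by edges, which is exactly what legitimizes joint convexity.

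With this reindexing in place, joint convexity of $\hockey$ yields
\begin{align*}
\hockey(\prin \| \prout) \leq \frac{1}{nN} \sum_{(D,D') \in E} \hockey(\alg_G(D) \| \alg_G(D')) = \frac{1}{n |\xin|} \sum_{D \in \xin} \sum_{\substack{D' \in \xout \\ D' \sim D}} \hockey(\alg_G(D) \| \alg_G(D')).
\end{align*}
For each edge, $\alg_G(D)$ and $\alg_G(D')$ are isotropic Gaussians $\mathcal{N}(q(D), \sigma^2 \mathbf{I}_d)$ and $\mathcal{N}(q(D'), \sigma^2 \mathbf{I}_d)$ whose means are separated by $\|q(D) - q(D')\|$; the exact Gaussian hockey-stick identity of~\cite{balle2018improving} (the same equality that makes the condition in Lemma~\ref{lem: balle} tight, with $\Delta$ replaced by the actual mean separation) evaluates $\hockey(\alg_G(D) \| \alg_G(D'))$ to the bracketed $\Phi$-expression appearing in the claim. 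Substituting this produces exactly the stated right-hand side.

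Finally, the same computation bounds $\hockey(\prout \| \prin)$: the edge reindexing is symmetric in the two coordinates, and the Gaussian hockey-stick divergence depends only on $\|q(D) - q(D')\|$ and is symmetric in its two arguments (both Gaussians share the covariance $\sigma^2 \mathbf{I}_d$), so the per-edge terms are unchanged. Taking the maximum of the two identical bounds completes the proof. I expect the main subtlety to be the reindexing step that rewrites both mixtures over the common edge set $E$ with matching uniform weights, since this is precisely what converts an otherwise inapplicable convexity bound into a valid one; once the degree-regularity is in hand, the remaining steps are direct applications of joint convexity and the known Gaussian identity.
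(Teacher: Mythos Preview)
Your proposal is correct and takes essentially the same approach as the paper: both arguments hinge on the regular bipartite adjacency structure between $\xin$ and $\xout$, joint convexity of the hockey-stick divergence as an $f$-divergence, and the exact Gaussian formula of~\cite{balle2018improving}. The only cosmetic difference is that you reindex both mixtures over the edge set $E$ and apply joint convexity once, whereas the paper first rewrites $\prout = \frac{1}{N}\sum_{D \in \xin}\bigl(\frac{1}{n}\sum_{D' \sim D} P'_{D'}\bigr)$ and applies convexity in two stages (first over $D \in \xin$, then over the neighbors of each $D$); the resulting bound is identical, and your one-step version is arguably cleaner.
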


\begin{proof}
Let $x \in X$ and $N := |\xin| = |\xout|$. Let $P(S) := \pr_{\alg}(\alg(D) \in S)$, $P'(S) := \pr_{\alg}(\alg(D') \in S)$, and denote the density functions of these distributions by $p$ and $p'$ respectively. (The distributions $P$ and $P'$ are parameterized by specific data sets $D$ and $D'$, but we omit the dependence to reduce notational clutter.) Note that $p(t) = \frac{1}{\sqrt{2\pi \sigma^2}} \exp\left(- \frac{(q(D) - t)^2}{2\sigma^2} \right)$ and $p'(t) = \frac{1}{\sqrt{2\pi \sigma^2}} \exp\left(- \frac{(q(D') - t)^2}{2\sigma^2} \right)$.
Recall that the hockey-stick divergence $\hockey$ is an $f$-divergence, with $f(t) = f_\eps(t) = \max(t - e^{\eps}, 0)$ \cite{sason2016fdivergence}. By joint convexity, 
\begin{align}
\label{ineq: gauss}
\hockey(\prin || \prout) &\leq \frac{1}{N}\sum_{D \in \xin} \hockey\left(P \Bigg |\Bigg| \normalsize \frac{1}{n} \sum_{\substack{D' \in \xout \\ D' \sim D}} P' \right) \nonumber \\
&\leq \frac{1}{N} \sum_{D \in \xin} \frac{1}{n} \sum_{\substack{D' \in \xout \\ D' \sim D}} \hockey\left(P || P'\right). 
\end{align}
Now, \begin{align*}
\hockey(P || P') &= \int \max\left(0, q(t) - e^\eps q'(t) \right) dt \nonumber \\
&= \int_{t: q(t) \geq e^{\eps} q'(t)} [q(t) - e^{\eps} q'(t)] dt,
\end{align*}
and by \cite{balle2018improving}, we have \begin{align*}
\hockey(P || P') &= \pr_{y \sim \alg(D)|D}\left[\log \frac{p(y)}{p'(y)} > \eps \right] - e^{\eps} \pr_{z \sim \alg(D')|D'}\left[\log \frac{p'(y)}{p(y)} < -\eps \right] \\
&= \Phi\left(\frac{\|q(D) - q(D')\|}{2\sigma} - \frac{\eps \sigma}{\|q(D) - q(D')\|}\right) - e^\eps \Phi\left(- \frac{\|q(D) - q(D')\|}{2\sigma} - \frac{\eps \sigma}{\|q(D) - q(D') \|}\right).
\end{align*}
Plugging this identity into the inequality~\ref{ineq: gauss} completes the proof. 
\end{proof}

\end{document}